\newtheorem{obs}{Observation}
\newcommand{\negA}{\vspace{-0.05in}}
\newcommand{\negC}{\vspace{-0.18in}}
\newcommand{\mysection}[1]{\negC\section{#1}\negA}
\newcommand{\myparagraph}[1]{\par\smallskip\par\noindent{\bf{}#1:~}}
\newcommand{\alg}[1]{\mbox{\sf #1}}  
\begin{document}

\mainmatter

\title{Deterministic Parameterized Algorithms for Matching and Packing Problems}

\author{Meirav Zehavi}

\institute{Department of Computer Science, Technion - Israel Institute of Technology, Haifa 32000, Israel\\
\mails}

\maketitle

\begin{abstract}
We present three deterministic parameterized algorithms for well-studied packing and matching problems, namely, {\em Weighted $q$-Dimensional $p$-Matching (($q,p$)-WDM)} and {\em Weighted $q$-Set $p$-Packing (($q,p$)-WSP)}. More specifically, we present an $O^*(2.85043^{(q-1)p})$ time deterministic algorithm for $(q,p)$-WDM, an $O^*(8.04143^p)$ time deterministic algorithm for the unweighted version of $(3,p)$-WDM, and an $O^*((0.56201\cdot2.85043^{q})^p)$ time deterministic algorithm for $(q,p)$-WSP. Our algorithms significantly improve the previously best known $O^*$ running times in solving $(q,p)$-WDM and $(q,p)$-WSP, and the previously best known deterministic $O^*$ running times in solving the unweighted versions of these problems. Moreover, we present kernels of size $O(e^qq(p-1)^q)$ for ($q,p$)-WDM and ($q,p$)-WSP, improving the previously best known kernels of size $O(q!q(p-1)^q)$ for these problems.
\end{abstract}

\mysection{Introduction}

We consider the following well-studied matching and packing problems.

\smallskip

{\noindent \underline{Weighted $q$-Dimensional $p$-Matching (($q,p$)-WDM)}}
\begin{itemize}
\item Input: Pairwise disjoint universes $U_1,\ldots,U_q$, a set ${\cal S}\subseteq U_1\times\ldots\times U_q$, a weight function $w: {\cal S}\rightarrow\mathbb{R}$, and a parameter $p$.
\item Output: A subset ${\cal S}'\subseteq {\cal S}$ of $p$ disjoint tuples, which maximizes $\sum_{S\in {\cal S}'}w(S)$.
\end{itemize}

{\noindent \underline{Weighted $q$-Set $p$-Packing (($q,p$)-WSP)}}
\begin{itemize}
\item Input: A universe $U$, a set ${\cal S}$ of subsets of size $q$ of $U$, a weight function $w: {\cal S}\rightarrow\mathbb{R}$, and a parameter~$p$.
\item Output: A subset ${\cal S}'\subseteq {\cal S}$ of $p$ disjoint sets, which maximizes $\sum_{S\in {\cal S}'}w(S)$.
\end{itemize}
The {\em $q$-Dimensional $p$-Matching (($q,p$)-DM)} problem is the special case of ($q,p$)-WDM in which all of the tuples in ${\cal S}$ have the same weight. Similarly, the {\em $q$-Set $p$-Packing (($q,p$)-SP)} problem is the special case of ($q,p$)-WSP in which all of the tuples in ${\cal S}$ have the same weight. Note that ($q,p$)-WDM is a special case of ($q,p$)-WSP.

As noted by Chen et al. \cite{impdetmatpac}, matching and packing problems form an important class of NP-hard problems. In particular, the six "basic" NP-complete problems include 3-Dimensional Matching \cite{subgraphisoneg}.

A {\em parameterized algorithm} solves an NP-hard problem by confining the combinatorial explosion to a parameter $k$. More precisely, a problem is {\em fixed-parameter tractable (FPT)} with respect to a parameter $k$ if an instance of size $n$ can be solved in time $O^*(f(k))$ for some function $f(k)$ \cite{fixedpar}.\footnote{$O^*$ hides factors polynomial in the input size.} A {\em kernelization algorithm} for a problem $P$ is a polynomial-time algorithm that, given an instance $x$ of $P$ and a parameter $k$, returns an instance $x'$ of $P$ whose size is bounded by some function $f(k)$, such that there is a solution to $x$ iff there is a solution to $x'$. We then say that $P$ has a kernel of size $f(k)$.

In this paper we present three deterministic parameterized algorithms and deterministic kernelization algorithms for ($q,p$)-WDM and ($q,p$)-WSP, where the parameter is $(p+q)$.

\begin{table}[center]
\centering
\begin{tabular}{|l|c|c|c|}
	\hline
	Reference 		              & Randomized$\setminus$Deterministic & Variation   & Running Time                          \\ \hline \hline
	Chen et al. \cite{chenalgorithmica2004}	    		& D              & $(q,p)$-SP  & $O^*((qp)^{O(qp)})$     							 \\ \hline
	Downey et al. \cite{fellowsbook}	    		      & D              & $(q,p)$-WSP & $O^*((qp)^{O(qp)})$     							 \\ \hline
	Fellows et al. \cite{fellowsalgorithmica2008}	  & D              & $(q,p)$-WSP & $O^*(2^{O(qp)})$     							   \\ \hline					
	Koutis \cite{koutis2005}												& D              & $(q,p)$-SP  & $O^*(2^{O(qp)})$    							     \\
																									& R              & $(q,p)$-SP  & $O^*(10.874^{qp})$    							   \\ \hline
	Chen et al. \cite{divandcol}	     	            & D              & $(q,p)$-WSP & $O^*(4^{qp+o(qp)})$       						 \\
																     						  & R              & $(q,p)$-WSP & $O^*(4^{(q-1)p+o(qp)})$               \\ \hline	
	Chen et al. \cite{impdetmatpac}	     						& D              & $(q,p)$-WSP & $O^*(4^{(q-0.5)p+o(qp)})$             \\ 
																	     						& D              & $(q,p)$-WDM & $O^*(4^{(q-1)p+o(qp)})$               \\ \hline
  Koutis \cite{multilineardetection}							& R              & $(q,p)$-SP  & $O^*(2^{qp})$ 		    	               \\ \hline	
  Koutis et al. \cite{appmultilinear}							& R              & $(q,p)$-DM  & $O^*(2^{(q-1)p})$ 			               \\ \hline	
	Bj$\ddot{\mathrm{o}}$rklund et al. \cite{bjo10}	& R              & $(q,p)$-DM  & $O^*(2^{(q-2)p})$ 			               \\ \hline				     						{\bf This paper}											     		  & {\bf D}        & $\bf(q,p)${\bf-WSP} & $\bf O^*((0.563\cdot 2.851^q)^{p})$ \\
																	     						& {\bf D}        & $\bf(q,p)${\bf-WDM} & $\bf O^*(2.851^{(q-1)p})$           \\ \hline							 	
\end{tabular}\smallskip
\caption{Known parameterized algorithms for ($q,p$)-WDM and ($q,p$)-WSP.}
\label{tab:knownresults}
\end{table}

\begin{table}[center]
\centering
\begin{tabular}{|l|c|c|c|}
	\hline
	Reference 		              & Randomized$\setminus$Deterministic & Variation   & Running Time              \\ \hline \hline
	Chen et al. \cite{chenalgorithmica2004}	    		& D              & $(3,p)$-SP  & $O^*(p^{O(p)})$           \\ \hline	
	Downey et al. \cite{fellowsbook}	    		      & D              & $(3,p)$-WSP & $O^*(p^{O(p)})$     		   \\ \hline
	Fellows et al. \cite{fellowsalgorithmica2008}	  & D              & $(3,p)$-WSP & $O^*(2^{O(p)})$     		   \\ \hline
	Liu et al. \cite{liutamc2007}	    				 	    & D              & $(3,p)$-WSP & $O^*(2,0097.152^{p})$     \\ \hline
	Koutis \cite{koutis2005}												& D              & $(3,p)$-SP  & $O^*(2^{O(qp)})$    			 \\
																									& R              & $(3,p)$-SP  & $O^*(1,285.475^{p})$      \\ \hline	
	Wang et al. \cite{wangcocoon2008}	    				 	& D              & $(3,p)$-WSP & $O^*(432.082^{p})$        \\ \hline
	Chen et al. \cite{predivandcol}	     						& D              & $(3,p)$-DM  & $O^*(21.907^{p})$         \\ \hline	
	Liu et al. \cite{chenipec}	     								& D              & $(3,p)$-SP  & $O^*(97.973^p)$       		 \\
																									& D              & $(3,p)$-DM  & $O^*(21.254^p)$       		 \\
																									& R              & $(3,p)$-DM  & $O^*(12.488^p)$        	 \\ \hline
	Chen et al. \cite{divandcol}	     							& D              & $(3,p)$-WSP & $O^*(64^{p+o(p)})$        \\ 
																     						  & R              & $(3,p)$-WSP & $O^*(16^{p+o(p)})$        \\ \hline	
	Wang et al. \cite{wangtamc2008}	    				  	& D              & $(3,p)$-SP  & $O^*(43.615^{p})$         \\ \hline
	Chen et al. \cite{impdetmatpac}	     						& D              & $(3,p)$-WSP & $O^*(32^{p+o(p)})$        \\ 	
			     														 						& D              & $(3,p)$-WDM & $O^*(16^{p+o(p)})$        \\ \hline
  Koutis \cite{multilineardetection}			        & R              & $(3,p)$-SP  & $O^*(8^{p})$              \\ \hline	
  Koutis et al. \cite{appmultilinear}							& R              & $(3,p)$-DM  & $O^*(4^{p})$              \\ \hline	
	Bj$\ddot{\mathrm{o}}$rklund et al. \cite{bjo10}	& R              & $(3,p)$-SP  & $O^*(3.344^{p})$      		 \\ 				     																																			        & R              & $(3,p)$-DM  & $O^*(2^{p})$      			   \\ \hline	
	{\bf This paper}											 		  		& {\bf D}        & $\bf(3,p)${\bf-WSP} & $\bf O^*(12.155^{p})$       				 \\ 	
			     														 						& {\bf D}        & $\bf(3,p)${\bf-WDM} & $\bf O^*(8.125^{p})$                \\ 	
			     														 						& {\bf D}        & $\bf(3,p)${\bf-DM}  & $\bf O^*(8.042^{p})$                \\ \hline							 	
\end{tabular}\smallskip
\caption{Known parameterized algorithms for ($3,p$)-WDM and ($3,p$)-WSP.}
\label{tab:knownresults3}
\end{table}

\myparagraph{Prior Work and Our Contribution}A lot of attention has been paid to ($q,p$)-WDM and ($q,p$)-WSP. Tables~\ref{tab:knownresults} and \ref{tab:knownresults3} present a summary of parameterized algorithms for these problems. In particular, Chen et al. \cite{impdetmatpac} gave a deterministic algorithm for $(q,p)$-WDM that runs in time $O^*(4^{(q-1)p+o(qp)})$. This algorithm has the previously best known $O^*$ running time for $(q,p)$-WDM (for any $q$), and the previously best known deterministic $O^*$ running time for $(q,p)$-DM (for any $q$). Our first result is a deterministic algorithm for $(q,p)$-WDM that runs in time $O^*(2.851^{(q-1)p})$. We thus achieve a significant improvement over the previously best known $O^*$ running time for $(q,p)$-WDM (for any $q$), and the previously best known deterministic $O^*$ running time for $(q,p)$-DM (for any $q$). Our second result is a deterministic algorithm for $(3,p)$-DM, which further reduces the $O^*$ running time of our first algorithm, when applied to $(3,p)$-DM, from $O^*(8.125^p)$ to $O^*(8.042^p)$.

Chen et al. \cite{divandcol} gave a randomized algorithm for $(q,p)$-WSP that runs in time $O^*(4^{(q-0.1)p+o(qp)})$, and Chen et al. \cite{impdetmatpac} gave a deterministic algorithm for $(q,p)$-WSP that runs in time $O^*(4^{(q-0.5)p+o(qp)})$. These algorithms have the previously best known $O^*$ running time for $(q,p)$-WSP (for any $q$), and the previously best known deterministic $O^*$ running time for $(q,p)$-SP (for any $q$). Our third result is a deterministic algorithm for $(q,p)$-WSP that runs in time $O^*((0.563\cdot 2.851^q)^{p})$, where for the special case of $(3,p)$-WSP, it runs in time $O^*(12.155^{p})$. We thus achieve a significant improvement over the previously best known $O^*$ running time for $(q,p)$-WSP (for any $q$), and the previously best known deterministic $O^*$ running time for $(q,p)$-SP (for any $q$).

Assuming that $q=O(1)$, Chen et al. \cite{impdetmatpac} gave kernels of size $O(q^qqp^q)$ for $(q,p)$-WDM and $(q,p)$-WSP. Fellows et al. \cite{fellowsalgorithmica2008} gave kernels of size $O(q!q(p-1)^q)$ for $(q,p)$-DM and $(q,p)$-SP, which can be extended to kernels of the same size for $(q,p)$-WDM and $(q,p)$-WSP. Dell et al. \cite{holmar2012} proved that $(q,p)$-DM is unlikely to admit a kernel of size $O(f(q)p^{q-\epsilon})$ for any function $f(q)$ and $\epsilon>0$ (improving upon a result by Hermelin et al. \cite{herwu2012}). Our fourth result presents kernels of size $O(e^qq(p-1)^q)$ for $(q,p)$-WDM and $(q,p)$-WSP.

\myparagraph{Organization}Section \ref{section:repset} gives some background about representative sets and two related results by Fomin et al. \cite{representative}. Sections \ref{section:wdm}, \ref{section:dm} and \ref{section:wsp} present deterministic algorithms for $(q,p)$-WDM, $(3,p)$-DM and $(q,p)$-WSP, respectively. Finally, Section \ref{section:kernel} gives kernels for $(q,p)$-WDM and $(q,p)$-WSP, and uses them to improve the running times of the algorithms presented in the previous three sections.

\mysection{Representative Sets}\label{section:repset}

Recently, Fomin et al. \cite{representative} presented two new efficient computations of representative sets, which they then used to design improved deterministic parameterized algorithms for "graph connectivity" problems such as $k$-Path (i.e., finding a path of length at least $k$ in a given graph). Our algorithms rely on these results, which we present in this section.

\begin{definition}
Let $U$ be a universe, $s,r\in\mathbb{Z}$, and $\cal A$ be a set of triples $(X,{\cal S}',W)$ s.t. $X\subseteq U$, $|X|=s$ and $W\in\mathbb{R}$.\\
We say that a subset $\widehat{\cal A}\subseteq {\cal A}$ {\em (max) $r$-represents $\cal A$} if for every $Y\subseteq U$ s.t. $|Y|\leq r$ the following holds: if there is $(X,{\cal S}',W)\in {\cal A}$ s.t. $X\cap Y=\emptyset$, then there is $(X^*,{\cal S}^*,W^*)\in\widehat{\cal A}$ s.t. $X^*\cap Y=\emptyset$ and $W^*\geq W$.
\end{definition}
By Section 4.2 in \cite{representative}, we have a deterministic algorithm, that we call \alg{R-Alg}($U,s,r,{\cal A}$), whose input, output and running time are as follows.

\begin{itemize}
\item Input: A universe $U$, $s,r\in\mathbb{Z}$, and a set $\cal A$ of triples $(X,{\cal S}',W)$ s.t. $X\subseteq U$, $|X|=s$ and $W\in\mathbb{R}$.
\item Output: A subset $\widehat{\cal A}\subseteq {\cal A}$ s.t. $|\widehat{{\cal A}}|\leq {s+r \choose s}2^{o(s+r)} \log |U|$, which $r$-represents $\cal A$.
\item Running time: $O(|{\cal A}|(\frac{s+r}{r})^r\log |U|)$.
\end{itemize}
By Section 4.1 in \cite{representative}, we have a deterministic algorithm, that we call \alg{K-Alg}($U,s,r,{\cal A}$), whose input, output and running time are as follows.

\begin{itemize}
\item Input: A universe $U$, $s,r\in\mathbb{Z}$, and a set $\cal A$ of triples $(X,{\cal S}',W)$ s.t. $X\subseteq U$, $|X|=s$ and $W\in\mathbb{R}$.
\item Output: A subset $\widehat{\cal A}\subseteq {\cal A}$ s.t. $|\widehat{{\cal A}}|\leq {s+r \choose s}$, which $r$-represents $\cal A$.
\item Running time: $O(|{\cal A}|{s+r \choose s}^{\tilde{w}-1}$$\log(s!|U|^{s^2}))$, where $\tilde{w}$$<$$2.373$ is the matrix multiplication exponent~\cite{vir2012}.
\end{itemize}
We also need the following observation from \cite{representative}.

\begin{obs}\label{obs:transitive}
Let $U$ be a universe, $s,r\in\mathbb{Z}$, and ${\cal A},\widehat{\cal A}$ and $\widehat{\cal A}'$ be sets of triples $(X,{\cal S}',W)$ s.t. $X\subseteq U$, $|X|=s$ and $W\in\mathbb{R}$. If $\widehat{\cal A}'$ $r$-represents $\widehat{\cal A}$ and $\widehat{\cal A}$ $r$-represents $\cal A$, then $\widehat{\cal A}'$ $r$-represents $\cal A$.
\end{obs}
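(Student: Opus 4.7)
The plan is to unfold the definition of $r$-representation twice and chain the two witnesses produced, exploiting transitivity of $\geq$ on $\mathbb{R}$. This is purely definitional, so I expect no real obstacle beyond being careful to verify every clause of the definition at the end.

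First I would fix an arbitrary $Y\subseteq U$ with $|Y|\leq r$ and assume there exists some $(X,{\cal S}',W)\in {\cal A}$ with $X\cap Y=\emptyset$; this is the hypothesis of the "if" part of the definition applied to $\widehat{\cal A}'$ and $\cal A$. Since by assumption $\widehat{\cal A}$ $r$-represents $\cal A$, applying the definition to this same $Y$ yields a triple $(\widehat{X},\widehat{\cal S},\widehat{W})\in \widehat{\cal A}$ with $\widehat{X}\cap Y=\emptyset$ and $\widehat{W}\geq W$.

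Next I would use the hypothesis that $\widehat{\cal A}'$ $r$-represents $\widehat{\cal A}$, now with the triple $(\widehat{X},\widehat{\cal S},\widehat{W})\in \widehat{\cal A}$ (still disjoint from the same $Y$) playing the role of the "if"-side witness. This produces $(X^*,{\cal S}^*,W^*)\in \widehat{\cal A}'$ with $X^*\cap Y=\emptyset$ and $W^*\geq \widehat{W}$. Combining the two weight inequalities gives $W^*\geq \widehat{W}\geq W$, so $(X^*,{\cal S}^*,W^*)$ is precisely the witness required by the definition for $\widehat{\cal A}'$ with respect to $\cal A$.

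Finally I would note the subset condition: from $\widehat{\cal A}'\subseteq \widehat{\cal A}$ and $\widehat{\cal A}\subseteq {\cal A}$ we have $\widehat{\cal A}'\subseteq {\cal A}$, so the definition of "$r$-represents" can indeed be invoked between $\widehat{\cal A}'$ and $\cal A$. Since $Y$ was arbitrary with $|Y|\leq r$, this establishes that $\widehat{\cal A}'$ $r$-represents $\cal A$, completing the argument. The proof is short and mechanical; the only "care" required is keeping the roles of the three families straight and reusing the same $Y$ throughout.
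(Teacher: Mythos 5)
Your proof is correct: the paper itself gives no proof of this observation (it is imported from Fomin et al.\ as a known fact), and your argument --- unfolding the definition twice, reusing the same $Y$, chaining the two witnesses via transitivity of $\geq$, and checking $\widehat{\cal A}'\subseteq\widehat{\cal A}\subseteq{\cal A}$ --- is exactly the intended definitional verification. Nothing is missing.
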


\mysection{An Algorithm for ($q,p$)-WDM}\label{section:wdm}

Let $<$ be an order on $U_1$. Roughly speaking, the idea of the algorithm is to iterate over $U_1$ in an ascending order, such that when we reach an element $u\in U_1$, we have already computed representative sets of sets of "partial solutions" that include only tuples whose first elements are smaller than $u$. Then, we try to extend the "partial solutions" by adding tuples whose first element is $u$ and computing new representative sets accordingly. Note that the elements in $U_1$ that appear in the "partial solutions" do not appear in any tuple whose first element is at least $u$, and that any tuple whose first element is at least $u$ does not contain elements in $U_1$ that appear in the "partial solutions". This allows us to use "better" representative sets, which improves the running time of the algorithm.

We next give the notation used in this section. We then describe the algorithm and give its pseudocode. Finally, we prove its correctness and running time.

\myparagraph{Notation}Denote $U=U_1\cup\ldots\cup U_q$. Let $u_s$ (resp. $u_g$) be the smallest (resp. greatest) element in $U_1$. Given $u\in U_1$, denote ${\cal S}_u=\{S\in {\cal S}: S$ includes $u\}$. Given a tuple $S$, let $\mathrm{set}(S)$ be the set of elements in $S$, excluding its first element. Given a set of tuples ${\cal S}'$, denote $\mathrm{tri}({\cal S}')=(\bigcup_{S\in{\cal S}'}\mathrm{set}(S),{\cal S}',\sum_{S\in{\cal S}'}w(S))$. Given a set of sets of tuples ${\bf S}$, denote $\mathrm{tri}({\bf S})=\{\mathrm{tri}({\cal S}'): {\cal S}'\in {\bf S}\}$. Given $S\in{\cal S}$ and $1\leq j\leq q$, let $S_j$ denote the tuple including the first $j$ elements in $S$, and define $w(S_j)=w(S)$.

Given $u\in U_1$ and $1\leq i\leq p$, let $SOL_{u,i}$ be the set of all sets of $i$ disjoint tuples in $\cal S$ whose first elements are at most $u$ (i.e., $SOL_{u,i}= \{{\cal S}'\subseteq\bigcup_{u'\in U_1\ \mathrm{s.t.}\ u'\leq u}{\cal S}_{u'}: |{\cal S}'|=i,$ the tuples in ${\cal S}'$ are disjoint$\}$).  Note that for all $(X,{\cal S}',W)\in\mathrm{tri}(SOL_{u,i})$, we have that $|X|=(q-1)i$. Given also $S\in {\cal S}_u$ and $1\leq j\leq q$, let $SOL_{u,i,S,j}$ be the set of all sets of disjoint tuples that include $S_j$ and $i-1$ tuples in $\cal S$ whose first elements are smaller than $u$ (i.e., $SOL_{u,i,S,j}= \{{\cal S}'\subseteq\{S_j\}\cup(\bigcup_{u'\in U_1\ \mathrm{s.t.}\ u'<u}{\cal S}_u'): S_j\in {\cal S}', |{\cal S}'|=i,$ the tuples in ${\cal S}'$ are disjoint$\}$). Note that for all $(X,{\cal S}',W)\in\mathrm{tri}(SOL_{u,i,S,j})$, we have that $|X|=(q-1)(i-1)+j-1$.

\myparagraph{The Algorithm}We now describe our algorithm for ($q,p$)-WDM, that we call \alg{WDM-Alg} (see the pseudocode below). The algorithm starts by introducing a matrix M, where each cell M$[u,i]$ will hold a subset of $SOL_{u,i}$.

\alg{WDM-Alg} iterates over $U_1$ in an ascending order. In each iteration, corresponding to some $u\in U_1$, it computes any cell of the form M$[u,i]$ s.t. $1\leq i\leq p$ by using M$[u',i]$ and M$[u',i-1]$ (where $u'$ is the element preceding $u$ in $U_1$). In other words, for any $1\leq i\leq p$, it computes a subset of $SOL_{u,i}$ by using subsets of $SOL_{u',i-1}$ and $SOL_{u',i}$. If there is a solution, then by using representative sets, \alg{WDM-Alg} guarantees that each cell M$[u,i]$ will hold "enough" sets from $SOL_{u,i}$, such that when the computation of M is finished, M$[u_g,p]$ will hold some ${\cal S}'\in SOL_{u_g,p}$ that maximizes $\sum_{S\in{\cal S}'}w(S)$ (clearly, such a set ${\cal S}'$ is a solution). Moreover, by using representative sets, \alg{WDM-Alg} guarantees that each cell M$[u,i]$ will not hold "too many" sets from $SOL_{u,i}$, since then we will not get an improved running time.

We now describe an iteration, corresponding to some $u\in U_1$, in more detail. By using \alg{R-Alg}, \alg{WDM-Alg} first computes a set that $(q-1)(p-1)$-represents tri($SOL_{u,1}$) (in Step \ref{stepwdm:mu1a}), and assigns its corresponding set of sets of tuples to M$[u,1]$ (in Step \ref{stepwdm:mu1b}). If $u=u_s$, then $SOL_{u,i}$ is empty for all $2\leq i\leq p$, and \alg{WDM-Alg} skips the rest of the iteration accordingly (thus M$[u,i]$ stays empty, as it is initialized, for all $2\leq i\leq p$). Next assume that $u>u_s$, and consider an iteration of the internal loop, corresponding to some $2\leq i\leq p$. First, in Step \ref{stepwdm:step1}, \alg{WDM-Alg} computes a set that $(q-1)(p-i)$-represents tri($SOL_{u,i}$) by using the sets in M$[u',i]$ and adding tuples in ${\cal S}_u$ to sets in M$[u',i-1]$. In particular, for any tuple $S\in {\cal S}_u$, \alg{WDM-Alg} calls \alg{WDM-Add}, which adds $S$ to sets of tuples disjoint from $S$ in M$[u',i-1]$. \alg{WDM-Add} iterates over the elements in $S$, adding them one by one (excluding the first element, which it does not add) to sets in M$[u',i-1]$. After adding each element, \alg{WDM-Add} uses \alg{R-Alg} to compute a representative set of the result.\footnote{This approach results in a running time better than that achieved by adding all the elements of the tuple "at once" and only then using \alg{R-Alg}.} Then, in Step \ref{stepwdm:step2}, \alg{WDM-Alg} uses \alg{R-Alg} to compute a representative set of the representative set it has just computed in Step \ref{stepwdm:step1} in order to reduce its size. Finally, in Step \ref{stepwdm:step3}, \alg{WDM-Alg} assigns the corresponding set of sets of tuples to M$[u,i]$.

\renewcommand{\thealgorithm}{1}
\begin{algorithm}[!ht]
\caption{\alg{WDM-Alg}($U_1,\ldots,U_q,{\cal S},w,p$)}
\begin{algorithmic}[1]\label{alg:wdm}
\STATE let M be a matrix that has a cell $[u,i]$ for all $u\in U_1$ and $1\leq i\leq p$, which is initialized to $\{\}$.
\FORALL{$u\in U_1$ {\bf ascending}}
	\STATE\label{stepwdm:mu1a} $\widehat{\cal A}\Leftarrow$ \alg{R-Alg}($U,q-1,(q-1)(p-1),$tri($\{\{S\}: S\in\bigcup_{u'\in U_1\ \mathrm{s.t.}\ u'\leq u}{\cal S}_{u'}\}$)).
	\STATE\label{stepwdm:mu1b} M$[u,1]\Leftarrow\{{\cal S'}: \exists X,W$ s.t. $(X,{\cal S'},W)\in\widehat{\cal A}\}$.
	\STATE {\bf if} $u=u_s$ {\bf then} skip the iteration. {\bf else} let $u'$ be the element preceding $u$ in $U_1$.
	\FOR{$i=2,\ldots,p$}\label{stepwdm:loop}
		\STATE\label{stepwdm:step1} ${\cal A}\Leftarrow \mathrm{tri}($M$[u',i]) \cup (\bigcup_{S\in {\cal S}_u}$\alg{WDM-Add}($i,S,\mathrm{M}[u',i-1]))$.
		\STATE\label{stepwdm:step2} $\widehat{{\cal A}}\Leftarrow$ \alg{R-Alg}($U,(q-1)i,(q-1)(p-i),{\cal A}$).
		\STATE\label{stepwdm:step3} M$[u,i]\Leftarrow\{{\cal S'}: \exists X,W$ s.t. $(X,{\cal S'},W)\in\widehat{\cal A}\}$.
	\ENDFOR
\ENDFOR
\STATE {\bf if} M$[u_g,p]=\emptyset$ {\bf then reject}. {\bf else return} ${\cal S}'\in\mathrm{M}[u_g,p]$ that maximizes $\sum_{S\in{\cal S}'}w(S)$.
\end{algorithmic}
\end{algorithm}

\begin{algorithm}[!ht]
\caption{\alg{WDM-Add}($i,S,{\bf S}$)}
\begin{algorithmic}[1]\label{alg:wdmadd}
\STATE\label{stepwdmadd:b1} $\widehat{\cal B}_1\Leftarrow \{(X,{\cal S}'\cup\{S_1\}, W+w(S)): (X,{\cal S}',W)\in \mathrm{tri}({\bf S})\}$.
\FOR{$j=2,\ldots,q$}
	\STATE\label{stepwdmadd:bj} ${\cal B}_j\Leftarrow \{(X\cup\{u_j\},({\cal S}'\setminus\{S_{j-1}\})\cup\{S_j\}, W): (X,{\cal S}',W)\in \widehat{\cal B}_{j-1}, u_j$ is the $j^\mathrm{st}$ element in $S, u_j\notin X\}$.
	\STATE\label{stepwdmadd:hatbj} $\widehat{\cal B}_j\Leftarrow$ \alg{R-Alg}($U,(q-1)(i-1)+(j-1),(q-1)(p-i)+(q-j),{\cal B}_j$).
\ENDFOR
\STATE {\bf return} $\widehat{\cal B}_q$.
\end{algorithmic}
\end{algorithm}

\myparagraph{Correctness and Running Time}We start by proving the following lemma regarding \alg{WDM-Add}.

\begin{lemma}\label{lemma:wdmadd}
Given $2\leq i\leq p$, $S\in S_u$ for some $u\in U_1$, and ${\bf S}$ s.t. $\mathrm{tri}({\bf S})$ $(q-1)(p-(i-1))$-represents $\mathrm{tri}(SOL_{u',i-1})$ where $u'$ is the element preceding $u$ in $U_1$, \alg{WDM-Add} returns a set that $(q-1)(p-i)$-represents $\mathrm{tri}(SOL_{u,i,S,q})$.
\end{lemma}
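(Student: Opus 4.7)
The plan is to establish, by induction on $j\in\{1,\ldots,q\}$, the invariant that $\widehat{\cal B}_j$ is a $((q-1)(p-i)+(q-j))$-representative of $\mathrm{tri}(SOL_{u,i,S,j})$. Setting $j=q$ will then recover the lemma, and the telescoping is calibrated so that the representation slack drops by one per iteration exactly matching the growth of the $X$-component by one element per iteration (keeping $|X|=(q-1)(i-1)+(j-1)$, which coincides with the $s$-argument fed to \alg{R-Alg}).

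For the base case $j=1$, I would exploit the fact that $\mathrm{set}(S_1)=\emptyset$, so that $\widehat{\cal B}_1$ is obtained from $\mathrm{tri}({\bf S})$ by a bijective rewriting that appends $S_1$ to the set-of-tuples component and shifts the weight by $w(S)$, without altering $X$. Every set in $SOL_{u,i,S,1}$ splits uniquely as $\{S_1\}\cup {\cal S}''$ with ${\cal S}''\in SOL_{u',i-1}$, and this split preserves $X$ and shifts the weight by exactly $w(S)$. Thus, given $Y$ of size at most $(q-1)(p-(i-1))=(q-1)(p-i)+(q-1)$ disjoint from the $X$-part of a triple in $\mathrm{tri}(SOL_{u,i,S,1})$, I would invoke the assumed representation property of $\mathrm{tri}({\bf S})$ on the split triple and reattach $S_1$ and $w(S)$ to produce the witness in $\widehat{\cal B}_1$.

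For the inductive step, given $(X,{\cal S}',W)\in\mathrm{tri}(SOL_{u,i,S,j})$ and $Y\subseteq U$ with $|Y|\leq (q-1)(p-i)+(q-j)$ and $X\cap Y=\emptyset$, I would let $u_j$ denote the $j$-th element of $S$. Since $j\geq 2$ forces $u_j\in X$, we have $u_j\notin Y$, and $Y':=Y\cup\{u_j\}$ has size at most $(q-1)(p-i)+(q-(j-1))$, which is the inductive representation parameter. Pulling back to stage $j-1$ by replacing $S_j$ with $S_{j-1}$ in ${\cal S}'$ and deleting $u_j$ from $X$ yields a triple in $\mathrm{tri}(SOL_{u,i,S,j-1})$ still disjoint from $Y'$, so the inductive hypothesis gives a witness $(X^*,{\cal S}^*,W^*)\in\widehat{\cal B}_{j-1}$ with $X^*\cap Y'=\emptyset$ and $W^*\geq W$. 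Since $u_j\notin X^*$, the update rule defining ${\cal B}_j$ outputs the triple $(X^*\cup\{u_j\},({\cal S}^*\setminus\{S_{j-1}\})\cup\{S_j\},W^*)$, which lies in ${\cal B}_j$, is disjoint from $Y$, and carries weight at least $W$. Hence ${\cal B}_j$ $((q-1)(p-i)+(q-j))$-represents $\mathrm{tri}(SOL_{u,i,S,j})$, and combined with the stated guarantee of \alg{R-Alg} and Observation~\ref{obs:transitive} the same holds for $\widehat{\cal B}_j$.

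I expect the main obstacle to be purely bookkeeping rather than any conceptual difficulty: one must verify that every triple in $\widehat{\cal B}_{j-1}$ really does contain $S_{j-1}$ in its tuple component (so that removing $S_{j-1}$ and inserting $S_j$ is legitimate), that $|X|=(q-1)(i-1)+(j-1)$ agrees throughout with the $s$-argument passed to \alg{R-Alg}, and that the exchange "one unit of representation slack for one element of $Y$" at each step keeps the telescoping tight from $r_1=(q-1)(p-(i-1))$ down to $r_q=(q-1)(p-i)$.
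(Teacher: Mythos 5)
Your proposal is correct and follows essentially the same route as the paper's proof: the same induction on $j$ with the invariant that $\widehat{\cal B}_j$ $((q-1)(p-i)+(q-j))$-represents $\mathrm{tri}(SOL_{u,i,S,j})$, the same base case transferring the hypothesis on $\mathrm{tri}({\bf S})$ via the bijection that appends $S_1$, and the same inductive step that pulls a triple back to stage $j-1$, enlarges $Y$ to $Y\cup\{u_j\}$, and pushes the resulting witness forward into ${\cal B}_j$ before invoking \alg{R-Alg} and Observation~\ref{obs:transitive}. The bookkeeping you flag (membership of $S_{j-1}$ in the tuple component, the value of $|X|$, and the tightness of the slack telescoping) all checks out.
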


\begin{proof}
By using induction on $j$, we prove that for all $1\leq j\leq q$, $\widehat{\cal B}_j$ $((q-1)(p-i)+(q-j))$-represents tri($SOL_{u,i,S,j}$). By Step \ref{stepwdmadd:b1}, since tri$({\bf S})$ $(q-1)(p-(i-1))$-represents tri($SOL_{u',i-1}$), we have that $\widehat{\cal B}_1$ $((q-1)(p-i)+(q-1))$-represents tri($SOL_{u,i,S,1}$).

Next consider some $2\leq j\leq q$, and assume that the claim holds for all $1\leq j'<j$. By the definition of \alg{R-Alg}, Observation \ref{obs:transitive} and Step \ref{stepwdmadd:hatbj}, it is enough to prove that ${\cal B}_j$ $((q-1)(p-i)+(q-j))$-represents tri$(SOL_{u,i,S,j})$.

By the induction hypothesis and Step \ref{stepwdmadd:bj}, we get that ${\cal B}_j\subseteq \mathrm{tri}(SOL_{u,i,S,j})$. Assume that there are $(X,{\cal S}',W)\in\mathrm{tri}(SOL_{u,i,S,j})$ and $Y\subseteq U\setminus X$ s.t. $|Y|\leq ((q-1)(p-i)+(q-j))$, since otherwise the claim clearly holds. Let $u_j$ be the $j^\mathrm{st}$ element in $S$. Note that $(X\setminus\{u_j\},({\cal S}'\setminus\{S_j\})\cup\{S_{j-1}\},W)\in \mathrm{tri}(SOL_{u,i,S,j-1})$. Thus, by the induction hypothesis, there is $(X^*,{\cal S}^*,W^*)\in\widehat{{\cal B}}_{j-1}$ s.t. $X^*\cap (Y\cup\{u_j\})=\emptyset$ and $W^*\geq W$. We get that $(X^*\cup\{u_j\},({\cal S}^*\setminus\{S_{j-1}\})\cup\{S_j\},W^*)\in{\cal B}_j$. Since $(X^*\cup\{u_j\})\cap Y=\emptyset$ and $W^*\geq W$, we get that the claim holds.\qed
\end{proof}

\begin{theorem}\label{theorem:wdm}
\alg{WDM-Alg} solves ($q,p$)-WDM in $O(2.85043^{(q-1)p}|{\cal S}||U|\log^2|U|)$ deterministic time. In particular, it solves ($3,p$)-WDM in $O^*(8.12492^p)$ deterministic time.
\end{theorem}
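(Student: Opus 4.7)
My plan is to split the argument into a correctness part proved by induction on $u\in U_1$ and a running-time part that reduces to a single binomial-coefficient optimisation.

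For correctness, I would prove by induction on $u\in U_1$ (ascending), with $i$ as a parallel index, the invariant that $\mathrm{tri}(\mathrm{M}[u,i])\subseteq\mathrm{tri}(SOL_{u,i})$ and that $\mathrm{tri}(\mathrm{M}[u,i])$ $(q-1)(p-i)$-represents $\mathrm{tri}(SOL_{u,i})$ for every $i\in\{1,\dots,p\}$. The base case $i=1$ is immediate from Steps~\ref{stepwdm:mu1a}--\ref{stepwdm:mu1b}, since $SOL_{u,1}$ is precisely the set of singletons fed into \alg{R-Alg}. For the inductive step with $i\geq 2$ and $u>u_s$, I would split an arbitrary ${\cal S}'\in SOL_{u,i}$ by whether it uses a tuple starting at $u$: if not, then ${\cal S}'\in SOL_{u',i}$ and the $\mathrm{tri}(\mathrm{M}[u',i])$ summand in Step~\ref{stepwdm:step1} provides a suitable witness by the inductive hypothesis on $(u',i)$; if yes, say with tuple $S\in{\cal S}_u$, then ${\cal S}'\setminus\{S\}\in SOL_{u',i-1}$, so the inductive hypothesis on $(u',i-1)$ together with Lemma~\ref{lemma:wdmadd} ensures that \alg{WDM-Add}($i,S,\mathrm{M}[u',i-1]$) returns a $(q-1)(p-i)$-representative of $\mathrm{tri}(SOL_{u,i,S,q})$. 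Combining the two cases and invoking Observation~\ref{obs:transitive} for the final \alg{R-Alg} in Step~\ref{stepwdm:step2} closes the induction. Specialising to $u=u_g$, $i=p$ (hence $r=0$) forces $\mathrm{M}[u_g,p]$ to contain a weight-maximal element of $SOL_{u_g,p}$, so the algorithm returns an optimum whenever one exists.

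For the running time, I would first observe that every \alg{R-Alg} call in either algorithm is invoked with parameters $s,r$ satisfying $s+r=(q-1)p$. This gives a uniform size bound $\binom{(q-1)p}{s}2^{o((q-1)p)}\log|U|$ on every output, and plugging the sizes of $\widehat{\cal B}_{j-1}$ and $\mathrm{M}[u',\cdot]$ into the \alg{R-Alg} running-time bound would show that each individual call costs at most
\[
\binom{(q-1)p}{s}\cdot\left(\frac{(q-1)p}{r}\right)^{r}\cdot 2^{o((q-1)p)}\log^2|U|.
\]
Summing over $j\in\{1,\dots,q\}$, $S\in{\cal S}_u$, $i\in\{1,\dots,p\}$ and $u\in U_1$ contributes only the additional $|{\cal S}||U|$ factor that appears in the theorem.

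The technical crux is the remaining optimisation of $\binom{s+r}{s}((s+r)/r)^{r}$ subject to $s+r=(q-1)p$. Setting $\alpha=s/(s+r)$ and applying Stirling reduces this to maximising $\phi(\alpha)=-\alpha\log_2\alpha-2(1-\alpha)\log_2(1-\alpha)$ on $(0,1)$; the critical equation $\alpha=e(1-\alpha)^2$ has relevant root $\alpha^*\approx 0.5499$, which gives $\phi(\alpha^*)\approx 1.51108$ and hence the universal bound $2^{\phi(\alpha^*)}\leq 2.85043$. Feeding this back into the per-call estimate yields the claimed $O(2.85043^{(q-1)p}|{\cal S}||U|\log^2|U|)$ bound, and substituting $q=3$ gives $O^*(8.12492^{p})$. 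The main obstacle I expect is precisely this estimate: the running time is \emph{not} dominated by cell sizes alone (which peak at $\alpha=1/2$ and would only give a base of $2$) but by the product of a cell size with the $((s+r)/r)^{r}$ overhead of \alg{R-Alg}, whose joint maximum is attained at $\alpha^*\approx 0.55$ and shifts the exponent base up to $2.85043$.
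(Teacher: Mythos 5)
Your proposal is correct and follows essentially the same route as the paper: the same induction (over the computation order of M, with the two-case split on whether the partial solution uses a tuple from ${\cal S}_u$, closed via Lemma~\ref{lemma:wdmadd} and Observation~\ref{obs:transitive}), and the same running-time optimisation of $\binom{s+r}{s}\bigl(\tfrac{s+r}{r}\bigr)^{r}$ subject to $s+r=(q-1)p$. Your critical equation $\alpha=e(1-\alpha)^2$ is exactly the paper's $\alpha = 1 + \frac{1-\sqrt{1+4e}}{2e}$, and you correctly identify that the base $2.85043$ comes from the joint maximum at $\alpha^*\approx 0.55$ rather than from cell sizes alone.
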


\begin{proof}
The following lemma clearly implies the correctness of the algorithm.

\begin{lemma}\label{lemma:wdm}
For all $u\in U_1$ and $1\leq i\leq p$, $\mathrm{tri}(\mathrm{M}[u,i])$ $(q-1)(p-i)$-represents $\mathrm{tri}(SOL_{u,i})$.
\end{lemma}

\begin{proof}
We prove the lemma by using induction on the order of the computation of M. For all $u\in U_1$, $SOL_{u,1} =\{\{S\}: S\in\bigcup_{u'\in U_1\ \mathrm{s.t.}\ u'\leq u}{\cal S}_{u'}\}$; and thus, by the definition of \alg{R-Alg} and Steps \ref{stepwdm:mu1a} and \ref{stepwdm:mu1b}, tri(M$[u,1]$) $(q-1)(p-1)$-represents tri$(SOL_{u,1})$. For all $2\leq i\leq p$, $SOL_{u_s,i} =\{\}$; and thus, by the initialization of M, tri(M$[u_s,i]$) $(q-1)(p-i)$-represents tri$(SOL_{u_s,i})$.

Next consider an iteration of Step \ref{stepwdm:loop} that corresponds to some $u\in U_1\setminus\{u_s\}$ and $2\leq i\leq p$, and assume that the lemma holds for the element $u'$ preceding $u$ in $U_1$ and all $1\leq i'\leq i$. By the definition of \alg{R-Alg}, Observation \ref{obs:transitive} and Steps \ref{stepwdm:step2} and \ref{stepwdm:step3}, it is enough to prove that $\cal A$ $(q-1)(p-i)$-represents tri$(SOL_{u,i})$.

By the induction hypothesis, Step \ref{stepwdm:step1} and Lemma \ref{lemma:wdmadd}, we have that ${\cal A}\subseteq \mathrm{tri}(SOL_{u,i})$. Assume that there are $(X,{\cal S}',W)\in\mathrm{tri}(SOL_{u,i})$ and $Y\subseteq U\setminus X$ s.t. $|Y|\leq (q-1)(p-i)$, since otherwise the lemma clearly holds. We have two possible cases as follows.
\begin{enumerate}
\item ${\cal S}'\cap {\cal S}_u=\emptyset$. Note that ${\cal S}'\in SOL_{u',i}$. Thus, by the induction hypothesis, there is $(X^*,{\cal S}^*,W^*)\in\mathrm{tri(M}[u',i])$ s.t. $X^*\cap Y=\emptyset$ and $W^*\geq W$; and therefore $(X^*,{\cal S}^*,W^*)\in{\cal A}$.
\item ${\cal S}'\cap {\cal S}_u=\{S\}$ for some $S$. Note that ${\cal S}'\in SOL_{u,i,S,q}$. Thus, by the induction hypothesis and Lemma \ref{lemma:wdmadd}, \alg{WDM-Add}$(i,S,\mathrm{M}[u',i-1])$ returns a set that includes a triple $(X^*,{\cal S}^*,W^*)$ s.t. $X^*\cap Y=\emptyset$ and $W^*\geq W$; and therefore $(X^*,{\cal S}^*,W^*)\in{\cal A}$.
\end{enumerate}
We get that there is $(X^*,{\cal S}^*,W^*)\in{\cal A}$ s.t. $X^*\cap Y=\emptyset$ and $W^*\geq W$.\qed
\end{proof}
By the definition of \alg{R-Alg} and the pseudocode, the algorithm runs in time

\[O(\sum_{u\in U_1}\sum_{i=1}^{p}\sum_{j=1}^q[{(q-1)p \choose (q-1)(i-1)+j-1}2^{o(qp)}|{\cal S}|(\frac{(q-1)p}{(q-1)(p-i)+q-j})^{(q-1)(p-i)+q-j}\log^2 |U|)] =\]

\[O(2^{o(qp)}|{\cal S}||U|\log^2|U|\cdot\mathrm{max}_{t=0}^{(q-1)p}\left\{{(q-1)p \choose t}(\frac{(q-1)p}{(q-1)p-t})^{(q-1)p-t}\right\})\]
The maximum is achieved at $i=\alpha (q-1)p$, where $\alpha = 1 + \frac{1-\sqrt{1+4e}}{2e}$. Thus, the running time of the algorithm is $O(2.85043^{(q-1)p}|{\cal S}||U|\log^2|U|)$.\qed
\end{proof}

\mysection{An Algorithm for ($3,p$)-DM}\label{section:dm}

Roughly speaking, the algorithm is based on combining the following lemma from \cite{predivandcol} with the algorithm presented in Section \ref{section:wdm}, as we next describe in more detail.

\begin{lemma}\label{lemma:prevmatch}
If there is a solution to the input, then for any set ${\cal P}\subseteq {\cal S}$ of $p-1$ disjoint tuples, there is a solution to the input whose tuples contain at least $2(p-1)$ elements of tuples in ${\cal P}$.
\end{lemma}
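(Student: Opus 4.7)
The plan is to proceed by an exchange argument on a carefully chosen extremal solution. Write $E(\mathcal{P})=\bigcup_{T\in\mathcal{P}}T$ for the $3(p-1)$ elements appearing in tuples of $\mathcal{P}$, and for any solution $\mathcal{S}^{\ast}$ define $f(\mathcal{S}^{\ast})=|E(\mathcal{S}^{\ast})\cap E(\mathcal{P})|$ and $g(\mathcal{S}^{\ast})=|\mathcal{S}^{\ast}\cap\mathcal{P}|$. I would pick $\mathcal{S}^{\ast}$ to lexicographically maximize the pair $(f,g)$ among all solutions (one exists by assumption), and aim to show that this extremal $\mathcal{S}^{\ast}$ already satisfies $f\geq 2(p-1)$, which is exactly what the lemma asks.

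Suppose for contradiction that $f<2(p-1)$. Since the tuples of $\mathcal{P}$ are pairwise disjoint, $f=\sum_{T\in\mathcal{P}}c_T$ with $c_T=|T\cap E(\mathcal{S}^{\ast})|$, so the average of $c_T$ over $\mathcal{P}$ is strictly below $2$; hence some $T\in\mathcal{P}$ has $c_T\leq 1$, and clearly $T\notin\mathcal{S}^{\ast}$ (else $c_T=3$). I would then form a new solution $\mathcal{S}'$ by removing one tuple of $\mathcal{S}^{\ast}$ and inserting $T$: if $c_T=1$, then $T$ meets a unique $S^{\ast}\in\mathcal{S}^{\ast}$ in a single element, and I remove this $S^{\ast}$; if $c_T=0$, then $T$ is disjoint from $\mathcal{S}^{\ast}$ and, since $|\mathcal{S}^{\ast}|=p>p-1\geq|\mathcal{P}|$, I may remove any $S\in\mathcal{S}^{\ast}\setminus\mathcal{P}$. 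In either subcase the resulting $\mathcal{S}'$ consists of $p$ pairwise disjoint tuples, and a direct count gives $f(\mathcal{S}')=f+3-d$, where $d$ denotes the size of the intersection between the removed tuple and $E(\mathcal{P})$.

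The main obstacle is the $d=3$ case, in which $f$ is unchanged by the swap and I must rely on the tiebreaker $g$ to detect progress. In the $c_T=1$ branch, the removed $S^{\ast}$ shares an element with $T\in\mathcal{P}$ while $S^{\ast}\neq T$, and pairwise disjointness of $\mathcal{P}$ then forces $S^{\ast}\notin\mathcal{P}$; in the $c_T=0$ branch, the removed tuple was deliberately chosen outside $\mathcal{P}$. Either way, swapping out a non-$\mathcal{P}$ tuple for the $\mathcal{P}$-tuple $T$ yields $g(\mathcal{S}')=g+1$, contradicting the lexicographic maximality of $(f,g)$. If instead $d<3$, then $f(\mathcal{S}')>f$, which is a more direct contradiction. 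Both branches being impossible, we conclude $f\geq 2(p-1)$, proving the lemma.
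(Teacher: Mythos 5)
The paper does not prove this lemma at all --- it is imported verbatim from Chen et al.\ \cite{predivandcol} --- so there is no in-paper argument to compare against; your proposal supplies a self-contained proof, and it is correct. The exchange argument is sound: $f=\sum_{T\in\mathcal{P}}c_T$ because the tuples of $\mathcal{P}$ are disjoint, so $f<2(p-1)$ forces some $T$ with $c_T\leq 1$ and $T\notin\mathcal{S}^{\ast}$; in both subcases the swapped-in family remains $p$ pairwise disjoint tuples (and remains a solution precisely because the problem here is the \emph{unweighted} $(3,p)$-DM, which is why the lemma and \alg{DM-Alg} do not extend to the weighted setting); and the one genuinely delicate point, the $d=3$ case where $f$ is unchanged, is correctly resolved by the secondary objective $g$, since in the $c_T=1$ branch disjointness of $\mathcal{P}$ forces the removed tuple outside $\mathcal{P}$ and in the $c_T=0$ branch you chose it outside $\mathcal{P}$, so $(f,g)$ strictly increases lexicographically either way. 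This is essentially the standard augmentation argument behind the cited result, and I see no gap.
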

Denote $U=U_1\cup U_2\cup U_3$, and let $<$ be an order on $U$. The algorithm first computes a set ${\cal P}\subseteq {\cal S}$ of $p-1$ disjoint tuples (by using recursion). By Lemma \ref{lemma:prevmatch}, there is $t\in\{1,2,3\}$ such that if there is a solution to the input, then there is a solution to the input whose tuples contain at least $\lceil4(p-1)/3\rceil$ elements in $U\setminus U_t$ that appear in (the tuples of) ${\cal P}$.

For each $t\in\{1,2,3\}$, the algorithm iterates over $U_t$ in an ascending order and over subsets of the set of elements in $U\setminus U_t$ that appear in ${\cal P}$, such that when we reach an element $u\in U_t$ and a subset $P$, we have already computed representative sets of sets of "partial solutions" that include only tuples whose $t^{\mathrm{st}}$ elements are smaller than $u$ and whose set of elements in $U\setminus U_t$ that appear in ${\cal P}$ is a subset of $P$. Then, we try to extend the "partial solutions" by adding tuples whose $t^{\mathrm{st}}$ element is $u$ and computing new representative sets accordingly. The representative sets do not need to hold information on elements in $U\setminus U_t$ that appear in ${\cal P}$ (we store the necessary information on such elements separately). Moreover, the elements in $U_t$ that appear in the "partial solutions" do not appear in any tuple whose $t^{\mathrm{st}}$ element is at least $u$, and any tuple whose $t^{\mathrm{st}}$ element is at least $u$ does not contain elements in $U_t$ that appear in the "partial solutions". We can thus use "better" representative sets, which improves the running time of the algorithm.

We next give the notation used in this section. We then describe the algorithm and give its pseudocode. Finally, we prove its correctness and running time.

\myparagraph{Notation}Let $t\in\{1,2,3\}$ and $P_t\subseteq U\setminus U_t$. Let $u^t_s$ (resp. $u^t_g$) be the smallest (resp. greatest) element in $U_t$. Given $u\in U_t$ and $P\subseteq P_t$, denote ${\cal S}_{t,u,P_t,P}=\{S\in {\cal S}: S$ includes $u$, $P$ is the set of elements in $S$ that appear in $P_t\}$. Given $S\in {\cal S}$, let $\mathrm{set}_{t,P_t}(S)$ be the set of elements in $S$, excluding its $t^{\mathrm{st}}$ element and elements that belong to $P_t$. Given ${\cal S}'\subseteq {\cal S}$, denote $\mathrm{tri}_{t,P_t}({\cal S}')=(\bigcup_{S\in{\cal S}'}\mathrm{set}_{t,P_t}(S),{\cal S'},1)$. Given ${\bf S}\subseteq 2^{\cal S}$, denote $\mathrm{tri}_{t,P_t}({\bf S})=\{\mathrm{tri}_{t,P_t}({\cal S}'): {\cal S}'\in {\bf S}\}$.

Given $u\in U_t$, $1\leq i\leq p$ and $P\subseteq P_t$, define $SOL_{t,u,i,P_t,P} = \{{\cal S}'\subseteq\bigcup_{u'\in U_t\ \mathrm{s.t.} u'\leq u, P'\subseteq P}{\cal S}_{t,u',P_t,P'}: |{\cal S'}|=i,$ the tuples in ${\cal S}'$ are disjoint, $P$ is the set of elements of the tuples in ${\cal S}'$ that appear in $P_t\}$. Note that for all $(X,{\cal S}',W)\in \mathrm{tri}_{t,P}(SOL_{t,u,i,P_t,P})$, we have that $|X|=2i-|P|$.

\myparagraph{The Algorithm}We now describe our algorithm for ($3,p$)-DM, that we call \alg{DM-Alg} (see the pseudocode below). In Step \ref{stepdm:p*}, \alg{DM-Alg} computes a set ${\cal P}\subseteq {\cal S}$ of $p-1$ disjoint tuples. Then, in Step \ref{stepdm:tloop}, it iterates over each $t\in\{1,2,3\}$ and $r\in \{0,\ldots,\lfloor(2p+4)/3\rfloor\}$, where $r$ notes the number of elements in $U\setminus U_t$ that do not appear in ${\cal P}$ and should appear in the currently desired solution. Next consider an iteration corresponding to such $t$ and $r$.

\alg{DM-Alg} introduces a matrix M, where each cell M$[u,i,P]$ will hold a subset of $SOL_{t,u,i,P_t,P}$. It then iterates over $U_t$ in an ascending order and over every subset $P$ of $P_t$ s.t. $2-r\leq|P|\leq 2p-r$. In each iteration, corresponding to such $u$ and $P$, \alg{DM-Alg} computes any cell of the form M$[u,i,P]$ s.t. $1\leq i\leq p$ by using M$[u',i,P]$ and M$[u',i-1,P']$ for all $P'\subseteq P$ (where $u'$ is the element preceding $u$ in $U_t$). In other words, for any $1\leq i\leq p$, \alg{DM-Alg} computes a subset of $SOL_{t,u,i,P_t,P}$ by using subsets of $SOL_{t,u',i,P_t,P}$ and $\bigcup_{P'\subseteq P}SOL_{t,u',i-1,P_t,P'}$. If there is a solution containing exactly $2p-r$ elements from $U\setminus U_t$ that appear in $P_t$, then by using representative sets, \alg{DM-Alg} guarantees that each cell M$[u,i,P]$ will hold "enough" sets from $SOL_{t,u,i,P_t,P}$, such that when the computation of M is finished, $\bigcup_{P\subseteq P_t}\mathrm{M}[u^t_g,p,P]$ will hold some ${\cal S}'\in \bigcup_{P\subseteq P_t}SOL_{t,u^t_g,p,P_t,P}$ (clearly, such a set ${\cal S}'$ is a solution). Moreover, by using representative sets, \alg{DM-Alg} guarantees that each cell M$[u,i,P]$ will not hold "too many" sets from $SOL_{t,u,i,P_t,P}$, since then we will not get an improved running time.

We now describe an iteration of Step \ref{step:dmuploop}, corresponding to some $u$ and $P$, in more detail. By using \alg{R-Alg}, \alg{DM-Alg} first computes a set that $(r-(2-|P|))$-represents tri($SOL_{t,u,1,P_t,P}$) (in Step \ref{stepdm:mu1a}), and assigns its corresponding set of sets of tuples to M$[u,1,P]$ (in Step \ref{stepdm:mu1b}). If $u=u_s$, then $SOL_{t,u,i,P_t,P}$ is empty for all $2\leq i\leq p$, and \alg{DM-Alg} skips the rest of the iteration accordingly (thus M$[u,i,P]$ stays empty, as it is initialized, for all $2\leq i\leq p$). Next assume that $u>u_s$, and consider an iteration of Step \ref{stepdm:loop}, corresponding to some $2\leq i\leq p$. First, in Step \ref{stepdm:step1}, \alg{DM-Alg} computes a set that $(r-(2i-|P|))$-represents tri($SOL_{t,u,i,P_t,P}$) by using the sets in M$[u',i,P]$ and adding tuples in ${\cal S}_{t,u,P_t,P\setminus P'}$ to sets in M$[u',i-1,P']$ for all $P'\subseteq P$. Then, in Step \ref{stepdm:step2}, \alg{DM-Alg} uses \alg{R-Alg} to compute a representative set of the representative set it has just computed in Step \ref{stepdm:step1} in order to reduce its size. Finally, in Step \ref{stepdm:step3}, \alg{DM-Alg} assigns the corresponding set of sets of tuples to M$[u,i,P]$.

\renewcommand{\thealgorithm}{2}
\begin{algorithm}[!ht]
\caption{\alg{DM-Alg}($U_1,\ldots,U_q,{\cal S},p$)}
\begin{algorithmic}[1]\label{alg:dm}
\STATE {\bf if} $p=1$ {\bf then return} some set including exactly one tuple in ${\cal S}$.
\STATE\label{stepdm:p*} ${\cal P}\Leftarrow$ \alg{DM-Alg}($U_1,\ldots,U_q,{\cal S},p-1$).
\FOR{$t=1,2,3$ and $r=0,\ldots,\lfloor(2p+4)/3\rfloor$}\label{stepdm:tloop}
	\STATE let $P_t$ be the set of elements of the tuples in ${\cal P}$, excluding those in $U_t$.
	\STATE let M be a matrix that has a cell $[u,i,P]$ for all $u\in U_t, 1\leq i\leq p$ and $P\subseteq P_t$, which is initialized to $\{\}$.
	\FORALL{$u\in U_t$ {\bf ascending} and $P\subseteq P_t$ s.t. $2-r\leq|P|\leq 2p-r$}	\label{step:dmuploop}	
		\STATE\label{stepdm:mu1a} $\widehat{\cal A}\Leftarrow$\alg{R-Alg}($U,2-|P|,r-(2-|P|),\mathrm{tri}_{t,P}$($\{\{S\}: S\in\bigcup_{u'\in U_t\ \mathrm{s.t.}\ u'\leq u}{\cal S}_{t,u',P_t,P}\}$)).
		\STATE\label{stepdm:mu1b} M$[u,1,P]\Leftarrow\{{\cal S'}: \exists X$ s.t. $(X,{\cal S'},1)\in\widehat{\cal A}\}$.			
		\STATE {\bf if} $u=u^t_s$ {\bf then} skip the iteration. {\bf else} let $u'$ be the element preceding $u$ in~$U_t$.
		\FOR{$i=2,\ldots,\lfloor\frac{|P|+r}{2}\rfloor$}\label{stepdm:loop}
			\STATE\label{stepdm:step1} ${\cal A}\Leftarrow \mathrm{tri}_{t,P}($M$[u',i,P]\cup \{{\cal S}'\cup\{S\}: \exists P'\subseteq P$ s.t. ${\cal S}'\in\mathrm{M}[u',i-1,P'], S\in {\cal S}_{t,u,P_t,P\setminus P'},$ no tuple in ${\cal S}'$ includes an element in $S\})$.
			\STATE\label{stepdm:step2} $\widehat{{\cal A}}\Leftarrow$ \alg{R-Alg}($U,2i-|P|,r-(2i-|P|),{\cal A}$).
			\STATE\label{stepdm:step3} M$[u,i,P]\Leftarrow\{{\cal S'}: \exists X$ s.t. $(X,{\cal S'},1)\in\widehat{\cal A}\}$.			
		\ENDFOR
	\ENDFOR
	\STATE {\bf if} $\bigcup_{P\subseteq P_t}\mathrm{M}[u^t_g,p,P]\neq\emptyset$ {\bf then return} ${\cal S}'\in\bigcup_{P'\subseteq P}\mathrm{M}[u^t_g,p,P]$.
\ENDFOR
\STATE {\bf reject}.
\end{algorithmic}
\end{algorithm}

\myparagraph{Correctness and Running Time}We summarize in the following theorem.

\begin{theorem}
\alg{DM-Alg} solves ($3,p$)-DM in $O(8.04143^p|{\cal S}||U|\log^2|U|)$ deterministic time.
\end{theorem}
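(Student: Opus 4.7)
The proof decomposes into correctness and running time. For correctness I would induct on $p$. The base $p=1$ is immediate, and for the inductive step the recursive call in Step~\ref{stepdm:p*} correctly produces $p-1$ disjoint tuples $\cal P$ whenever any size-$p$ solution exists (such tuples can be harvested from the target solution itself). Given $\cal P$, Lemma~\ref{lemma:prevmatch} together with a pigeonhole over the three coordinates yields some $t\in\{1,2,3\}$ for which a solution contains at least $\lceil 4(p-1)/3\rceil$ elements of $\cal P$'s tuples in $U\setminus U_t$; the remaining budget is exactly $r$ elements for some $r\in\{0,\ldots,\lfloor(2p+4)/3\rfloor\}$, and the algorithm enumerates all such $(t,r)$. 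For the correct $(t,r)$ I would prove, by induction on the order in which cells of $M$ are filled, the invariant
\[
\mathrm{tri}_{t,P_t}(\mathrm{M}[u,i,P])\text{ $(r-(2i-|P|))$-represents }\mathrm{tri}_{t,P_t}(SOL_{t,u,i,P_t,P}).
\]
The argument parallels Lemma~\ref{lemma:wdm}: case-split on whether the target solution uses a tuple $S$ whose $t^{\mathrm{st}}$ element equals $u$, with the ``no'' case handled by the $\mathrm{M}[u',i,P]$ summand of Step~\ref{stepdm:step1} and the ``yes'' case handled by the summand with $P'=P\setminus\mathrm{set}_{t,P_t}(S)$; Observation~\ref{obs:transitive} then absorbs the subsequent shrinking call to \alg{R-Alg} in Step~\ref{stepdm:step2}. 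Taking $i=p$ and unioning over $P$ forces $\bigcup_P \mathrm{M}[u^t_g,p,P]$ to contain a solution, which the algorithm returns.

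For the running time, let $T(p)$ denote the total time. Since the recursion in Step~\ref{stepdm:p*} drops $p$ by one and the per-level contributions will be geometric, it suffices to bound the non-recursive work at the top call. Fix a $(t,r)$ iteration. The output guarantee of \alg{R-Alg} gives $|\mathrm{M}[u',i,P']|\le\binom{r}{2i-|P'|}2^{o(r)}\log|U|$, and by Vandermonde's identity
\[
\sum_{P'\subseteq P}\binom{r}{2(i-1)-|P'|}\;=\;\binom{|P|+r}{2i-2},
\]
so the set $\cal A$ assembled in Step~\ref{stepdm:step1} has size at most $|{\cal S}|\binom{|P|+r}{2i-2}2^{o(r)}\log|U|$. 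Multiplying by the \alg{R-Alg} time factor $\bigl(r/(r-(2i-|P|))\bigr)^{r-(2i-|P|)}$ and summing over $u\in U_t$, over $P\subseteq P_t$ of each size $k$ (giving a factor $\binom{2(p-1)}{k}$ from $|P_t|=2(p-1)$), and over $i,r,t$ (polynomial in $p$) yields a bound of the form
\[
O\Bigl(|{\cal S}||U|\log^2|U|\cdot 2^{o(p)}\cdot\max_{r,k,i}\binom{2(p-1)}{k}\binom{k+r}{2i-2}\Bigl(\tfrac{r}{r-(2i-k)}\Bigr)^{r-(2i-k)}\Bigr),
\]
subject to $r\le\lfloor(2p+4)/3\rfloor$, $0\le k\le 2(p-1)$, and $k\le 2i\le k+r$.

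The main obstacle is the three-variable numerical optimization of this maximum. I would normalize $r=\rho p$, $k=\kappa p$, $2i=\eta p$, replace the binomials by their entropy approximations, and maximize the resulting expression on the box $\rho\in[0,2/3]$, $\kappa\in[0,2]$, $\kappa\le\eta\le\kappa+\rho$ via Lagrange multipliers. This optimization is precisely where the gain over the $8.125^p$ bound that one would obtain by plugging $q=3$ into \alg{WDM-Alg} comes from: the cap $\rho\le 2/3$ imposed by the pigeonhole choice of $\cal P$ is strictly tighter than the unconstrained optimum driving the \alg{WDM-Alg} analysis, and the extra entropy from enumerating $P\subseteq P_t$ is outweighed by the shrinkage of the representative sets. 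The computation is expected to yield $\log_2 8.04143$ as the optimal value, giving the claimed bound, after which the recurrence $T(p)=T(p-1)+O(8.04143^p|{\cal S}||U|\log^2|U|)$ resolves geometrically; the small boundary cases ($i=1$, extremal $|P|$, and $u=u^t_s$) should be verified routinely.
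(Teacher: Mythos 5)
Your correctness argument is exactly the paper's: induction on $p$, harvesting $\cal P$ from the recursive call, the pigeonhole over the three coordinates via Lemma~\ref{lemma:prevmatch}, and the cell invariant proved by induction on the order of computation with the same two-case split. That half is fine.

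The running-time analysis, however, has a genuine gap at the Vandermonde step. You bound $|{\cal A}|$ in Step~\ref{stepdm:step1} by $|{\cal S}|\sum_{P'\subseteq P}|\mathrm{M}[u',i-1,P']|\le |{\cal S}|\binom{|P|+r}{2i-2}2^{o(r)}\log|U|$, summing the representative-set size bounds over \emph{all} $2^{|P|}$ subsets $P'$. This is a valid inequality but exponentially loose, and the three-variable maximum of the resulting expression is \emph{not} $\log_2 8.04143$. Concretely, at the feasible point $k=|P|=p$, $r=\lfloor 2p/3\rfloor$, $2i=4p/3$ your expression evaluates to roughly
\[
\binom{2p}{p}\binom{5p/3}{p/3}2^{p/3}\;\approx\;4^p\cdot 2^{1.2032p}\cdot 2^{p/3}\;\approx\;11.6^p,
\]
which already exceeds the claimed $8.04143^p$, so the optimization you defer to "is expected to yield $\log_2 8.04143$" cannot do so for the quantity you wrote down. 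The observation you are missing is structural: each tuple $S\in{\cal S}$ with $t^{\mathrm{st}}$ element $u$ is paired in Step~\ref{stepdm:step1} with exactly one $P'$, namely $P'=P\setminus(\mathrm{set\ of\ elements\ of\ }S\mathrm{\ in\ }P_t)$, and since a $3$-tuple has at most two elements outside $U_t$ we have $|P\setminus P'|\le 2$. Hence only $O(p^2)$ subsets $P'$ contribute to $\cal A$, each with $2(i-1)-|P'|\in\{2i-|P|-2,\ldots,2i-|P|\}$, so $|{\cal A}|=O(|{\cal S}|\binom{r}{2i-|P|}2^{o(r)}\mathrm{poly}(p)\log|U|)$. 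With this bound the cost of the \alg{R-Alg} call in Step~\ref{stepdm:step2} is $\binom{r}{2i-|P|}(\frac{r}{r-(2i-|P|)})^{r-(2i-|P|)}$ per cell (up to the stated factors), the sum over all $P\subseteq P_t$ contributes only the counting factor $2^{|P_t|}\le 4^p$, and the remaining one-dimensional maximum over $s=2i-|P|\le r\le\lfloor 2p/3\rfloor$ gives $2.850423^{2p/3}$, yielding $4^p\cdot 2.850423^{2p/3}=O(8.04143^p)$ as claimed. Your high-level explanation of where the gain over $8.125^p$ comes from is right, but the bookkeeping as written does not deliver the bound.
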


\begin{proof}
We prove the theorem by using induction on $p$. For $p=1$, the theorem clearly holds. Next consider some $p\geq 2$ and assume that the theorem holds for all $1\leq p'<p$. By the induction hypothesis, the set ${\cal P}$ computed in Step \ref{stepdm:p*} contains (exactly) $p-1$ disjoint tuples from $\cal S$.

Clearly, for all $1\leq t\leq 3, 0\leq r\leq\lfloor(2p+4)/3\rfloor$ and $P\subseteq P_t$ s.t. $|P|\leq 2p-r$, we have that any ${\cal S}'\in SOL_{t,u^t_g,p,P_t,P}$ is a solution to the input. Now, suppose that there is a solution to the input. By Lemma \ref{lemma:prevmatch}, there is a solution ${\cal S}'$ to the input whose tuples contain at least $2(p-1)$ elements of tuples in ${\cal P}$. Thus, there are $1\leq t\leq 3$, $0\leq r\leq\lfloor(2p+4)/3\rfloor$ and $P\subseteq P_t$ s.t. $|P|=2p-r$, for which $SOL_{t,u^t_g,p,P_t,P}\neq\emptyset$. Thus, the following lemma implies the correctness of the algorithm.

\begin{lemma}
Consider an iteration of Step \ref{stepdm:tloop}, corresponding to some $1\leq t\leq 3$ and $0\leq r\leq\lfloor(2p+4)/3\rfloor$.
For all $u\in U_t$, $1\leq i\leq p$ and $P\subseteq P_t$ s.t. $(|P|\leq 2p-r\wedge i\leq\lfloor\frac{|P|+r}{2}\rfloor)$, $\mathrm{tri}_{t,P}(\mathrm{M}[u,i,P])$ $(r-(2i-|P|))$-represents $\mathrm{tri}_{t,P}$($SOL_{t,u,i,P_t,P}$).
\end{lemma}

\begin{proof}
We prove the lemma by using induction on the order of the computation of M. For all $u\in U_t$ and $P\subseteq P_t$ s.t. $2-r\leq |P|\leq 2p-r$, $SOL_{t,u,1,P_t,P} = \{\{S\}: S\in\bigcup_{u'\in U_t\ \mathrm{s.t.}\ u'\leq u}{\cal S}_{t,u',P_t,P}\}$; and thus, by the definition of \alg{R-Alg} and Steps \ref{stepdm:mu1a} and \ref{stepdm:mu1b}, tri$_{t,P}$(M[$u,1,P$]) $(r-(2-|P|))$-represents tri$_{t,P}$($SOL_{t,u,1,P_t,P}$). For all $P\subseteq P_t$ s.t. $|P|\leq 2p-r$ and $2\leq i\leq \lfloor\frac{|P|+r}{2}\rfloor$, $SOL_{t,u^t_s,i,P_t,P} = \{\}$; and thus, by the initialization of M, tri$_{t,P}$(M[$u^t_s,i,P$]) $(r-(2i-|P|))$-represents tri$_{t,P}$($SOL_{t,u^t_s,i,P_t,P}$).

Next consider an iteration of Step \ref{stepdm:loop} that corresponds to some $u\in U_t\setminus\{u^t_s\}, P\subseteq P_t$ s.t. $|P|\leq 2p-r$ and $2\leq i\leq \lfloor\frac{|P|+r}{2}\rfloor$, and assume that the lemma holds for for the element $u'$ preceding $u$ in $U_t$, all $P'\subseteq P$ and all $1\leq i'\leq\mathrm{min}\{i,\lfloor\frac{|P'|+r}{2}\rfloor\}$. By the definition of \alg{R-Alg}, Observation \ref{obs:transitive} and Steps \ref{stepdm:step2} and \ref{stepdm:step3}, it is enough to prove that $\cal A$ $(r-(2i-|P|))$-represents tri$_{t,P}$($SOL_{t,u,i,P_t,P}$).

By the induction hypothesis and Step \ref{stepdm:step1}, we have that ${\cal A} \subseteq \mathrm{tri}_{t,P}$($SOL_{t,u,i,P_t,P}$). Assume that there are $(X,{\cal S}',1)\in\mathrm{tri}(SOL_{t,u,i,P_t,P})$ and $Y\subseteq U\setminus X$ s.t. $|Y|\leq r-(2i-|P|)$, since otherwise the lemma clearly holds. We have two possible cases as follows.
\begin{enumerate}
\item For all $P'\subseteq P$, ${\cal S}'\cap {\cal S}_{t,u,P_t,P\setminus P'}=\emptyset$. Note that ${\cal S}'\in SOL_{t,u',i,P_t,P}$. Thus, by the induction hypothesis, there is $(X^*,{\cal S}^*,1)\in\mathrm{tri(M}[u',i,P])$ s.t. $X^*\cap Y=\emptyset$; and therefore $(X^*,{\cal S}^*,1)\in{\cal A}$.

\item There is $P'\subseteq P$ s.t. ${\cal S}'\cap {\cal S}_{t,u,P_t,P\setminus P'}=\{S\}$ for some $S$. Note that $|P'|\leq 2p-r$, $i-1\leq \lfloor\frac{|P'|+r}{2}\rfloor$ and ${\cal S}'\setminus\{S\}\in SOL_{t,u',i-1,P_t,P'}$. Thus, by the induction hypothesis, there is $(X^*,{\cal S}^*,1)\in\mathrm{tri}_{t,P'}\mathrm{(M}[u',i-1,P'])$ s.t. $X^*\cap (Y\cup\mathrm{set}_{t,P_t}(S))=\emptyset$. We get that $(X^*\cup\mathrm{set}_{t,P_t}(S),{\cal S}^*\cup\{S\},1)\in{\cal A}$.
\end{enumerate}
We get that there is $(X^*,{\cal S}^*,1)\in{\cal A}$ s.t. $X^*\cap Y=\emptyset$.\qed
\end{proof}
By the induction hypothesis, the definition of \alg{R-Alg} and the pseudocode, the algorithm runs in time

\[O(\sum_{t=1}^3\sum_{r=0}^{\lfloor2p/3\rfloor}\sum_{u\in U_t}\sum_{P\subseteq P_t\ \mathrm{s.t.}\ 2-r\leq|P|\leq 2p-r}\sum_{i=1}^{\lfloor\frac{|P|+r}{2}\rfloor}[{r \choose 2i-|P|}2^{o(r)}|{\cal S}|(\frac{r}{r-(2i-|P|)})^{r-(2i-|P|)}\log^2 |U|]) =\]

\[O(4^p2^{o(p)}|{\cal S}||U|\log^2|U|\mathrm{max}_{t=0}^{\lfloor2p/3\rfloor}{\lfloor2p/3\rfloor \choose t}(\frac{\lfloor2p/3\rfloor}{\lfloor2p/3\rfloor-t})^{\lfloor2p/3\rfloor-t})\]
The maximum is achieved at $t=\alpha\lfloor2p/3\rfloor$, where $\alpha = 1 + \frac{1-\sqrt{1+4e}}{2e}$. Thus, the running time of the algorithm is $O(4^p\cdot 2.850423^{2p/3}|{\cal S}||U|\log^2|U|) = O(8.04143^p\cdot|{\cal S}||U|\log^2|U|)$.\qed
\end{proof}

\mysection{An Algorithm for ($q,p$)-WSP}\label{section:wsp}

Let $<$ be an order on $U$. Roughly speaking, the algorithm is based on combining the following lemma from \cite{predivandcol} with the algorithm presented in Section \ref{section:wdm}, as we next describe in more detail.

\begin{lemma}\label{lemma:wspobs}
Let ${\cal S}'\subseteq{\cal S}$, and denote $S_\mathrm{min} = \{u: \exists S\in{\cal S}'$ s.t. $u$ is the smallest element in $S\}$. Then, any $S\in {\cal S}$ whose smallest element is greater than $\mathrm{max}(S_\mathrm{min})$ does not contain any element from $S_\mathrm{min}$.
\end{lemma}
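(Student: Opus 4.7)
The plan is to prove the lemma by a direct contradiction using only transitivity of the order $<$ on $U$ together with the two defining conditions. First I would name the relevant quantities: let $S \in {\cal S}$ with smallest element $u^*$, assume $u^* > \max(S_\mathrm{min})$, and suppose for contradiction that there exists some $u \in S \cap S_\mathrm{min}$.

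The contradiction then comes from chaining two inequalities. On one hand, since $u^*$ is by definition the smallest element of $S$ and $u \in S$, we have $u \geq u^*$. On the other hand, the definition of $S_\mathrm{min}$ together with the choice of $u$ gives $u \leq \max(S_\mathrm{min})$, and by hypothesis $\max(S_\mathrm{min}) < u^*$. Combining these yields $u < u^*$, contradicting $u \geq u^*$. Hence no element of $S_\mathrm{min}$ can belong to $S$, which is exactly the conclusion of the lemma.

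There is no real obstacle here. The only subtlety worth emphasizing is keeping the two different notions of ``smallest'' separate: the set $S_\mathrm{min}$ collects the smallest elements of the sets in the chosen subfamily ${\cal S}'$, whereas $u^*$ is the smallest element of the particular set $S$ considered in the conclusion. Once these are disentangled, the statement reduces to an immediate comparison in the totally ordered set $(U,<)$, and no machinery beyond transitivity is needed.
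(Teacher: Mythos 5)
Your proof is correct: the chain $u \geq u^*$ (since $u^*$ is the smallest element of $S$) versus $u \leq \max(S_\mathrm{min}) < u^*$ is exactly the immediate contradiction the statement calls for. Note that the paper itself gives no proof --- it imports this lemma from prior work --- so your elementary verification is precisely what is needed and there is nothing to compare against.
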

The algorithm iterates over $U$ in an ascending order, such that when we reach an element $u\in U$, we have already computed representative sets of sets of "partial solutions" that include only sets whose smallest elements are smaller than $u$. Then, we try to extend the "partial solutions" by adding sets whose smallest element is $u$ and computing new representative sets accordingly. By Lemma \ref{lemma:wspobs}, the elements in $U$ that are the smallest elements of sets in the "partial solutions" do not appear in any set whose smallest element is at least $u$. This allows us to use "better" representative sets, which improves the running time of the algorithm. We note that the sets in the "partial solutions" can contain $u$ (and elements greater than $u$); thus the running time of \alg{WDM-Alg} (see Section \ref{section:wdm}) is better than the running time of the algorithm presented in this section.

We next give the notation used in this section. Since the algorithm is similar to \alg{WDM-Alg} (see Section \ref{section:wdm}), we only give its pseudocode. Finally, we prove its correctness and running time.

\myparagraph{Notation}Let $u_s$ (resp. $u_g$) be the smallest (resp. greatest) element in $U$. Given $u\in U$, denote ${\cal S}_u=\{S\in{\cal S}: u$ is the smallest element in~$S\}$. Given a set $S$, let $\mathrm{set}(S)$ be the set of elements in $S$, excluding its smallest element. Given a set of sets ${\cal S}'$, denote $\mathrm{tri}({\cal S}')=(\bigcup_{S\in{\cal S}'}\mathrm{set}(S),{\cal S}',\sum_{S\in{\cal S}'}w(S))$. Given a set of sets of sets ${\bf S}$, denote $\mathrm{tri}({\bf S})=\{\mathrm{tri}({\cal S}'): {\cal S}'\in{\bf S}\}$. Given $S\in{\cal S}$ and $1\leq j\leq q$, let $S_j$ denote the set including the $j$ smallest elements in $S$, and define $w(S_j)=w(S)$.

Given $u\in U$ and $1\leq i\leq p$, define $SOL_{u,i}= \{{\cal S}'\subseteq\bigcup_{u'\in U\ \mathrm{s.t.}\ u'\leq u}{\cal S}_u': |{\cal S}'|=i,$ the sets in ${\cal S}'$ are disjoint$\}$. Note that for all $(X,{\cal S}',W)\in\mathrm{tri}(SOL_{u,i})$, we have that $|X|=(q-1)i$. Given also $S\in {\cal S}_u$ and $1\leq j\leq q$, define $SOL_{u,i,S,j}= \{{\cal S}'\subseteq\{S_j\}\cup(\bigcup_{u'\in U\ \mathrm{s.t.}\ u'<u}{\cal S}_u'): S_j\in {\cal S}', |{\cal S}'|=i,$ the sets in ${\cal S}'$ are disjoint$\}$. Note that for all $(X,{\cal S}',W)\in\mathrm{tri}(SOL_{u,i,S,j})$, we have that $|X|=(q-1)(i-1)+j-1$.

\myparagraph{The Algorithm}The pseudocode of our algorithm for ($q,p$)-WSP, called \alg{WSP-Alg}, is given below.

\renewcommand{\thealgorithm}{3}
\begin{algorithm}[!ht]
\caption{\alg{WSP-Alg}($U,{\cal S},w,p$)}
\begin{algorithmic}[1]\label{alg:wsp}
\STATE let M be a matrix that has a cell $[u,i]$ for all $u\in U$ and $1\leq i\leq p$, which is initialized to $\{\}$.
\FORALL{$u\in U$ {\bf ascending}}
	\STATE $\widehat{\cal A}\Leftarrow$ \alg{R-Alg}($U,q-1,q(p-1),$tri($\{\{S\}: S\in\bigcup_{u'\in U\ \mathrm{s.t.} u'\leq u}{\cal S}_{u'}\}$)).
	\STATE M$[u,1]\Leftarrow\{{\cal S'}: \exists X,W$ s.t. $(X,{\cal S'},W)\in\widehat{\cal A}\}$.	
	\STATE {\bf if} $u=u_s$ {\bf then} skip the iteration. {\bf else} let $u'$ be the element preceding $u$ in $U$.
	\FOR{$i=2,\ldots,p$}
		\STATE ${\cal A}\Leftarrow \mathrm{tri}($M$[u',i]) \cup (\bigcup_{S\in {\cal S}_u}$\alg{WSP-Add}($i,S,\mathrm{M}[u',i-1])$).
		\STATE $\widehat{{\cal A}}\Leftarrow$ \alg{R-Alg}($U,(q-1)i,q(p-i),{\cal A}$).
		\STATE M$[u,i]\Leftarrow\{{\cal S'}: \exists X,W$ s.t. $(X,{\cal S'},W)\in\widehat{\cal A}\}$.
	\ENDFOR
\ENDFOR
\STATE {\bf if} M$[u_g,p]=\emptyset$ {\bf then reject}. {\bf else return} ${\cal S}'\in\mathrm{M}[u_g,p]$ that maximizes $\sum_{S\in{\cal S}'}w(S)$.
\end{algorithmic}
\end{algorithm}

\begin{algorithm}[!ht]
\caption{\alg{WSP-Add}($i,S,{\bf S}$)}
\begin{algorithmic}[1]\label{alg:wspadd}
\STATE $\widehat{\cal B}_1\Leftarrow \{(X,{\cal S}'\cup\{S_1\}, W+w(S)): (X,{\cal S}',W)\in \mathrm{tri}({\bf S}),$ no set in ${\cal S}'$ includes the element in $S_1\}$.
\FOR{$j=2,\ldots,q$}
	\STATE ${\cal B}_j\Leftarrow \{(X\cup\{u_j\},({\cal S}'\setminus\{S_{j-1}\})\cup\{S_j\}, W): (X,{\cal S}',W)\in \widehat{\cal B}_{j-1}, u_j$ is the $j^\mathrm{st}$ smallest element in $S, u_j\notin X\}$.
	\STATE $\widehat{\cal B}_j\Leftarrow$ \alg{R-Alg}($U,(q-1)(i-1)+(j-1),q(p-i)+(q-j),{\cal B}_j$).
\ENDFOR
\STATE {\bf return} $\widehat{\cal B}_q$.
\end{algorithmic}
\end{algorithm}

\myparagraph{Correctness and Running Time}By using the new definitions of $\mathrm{set}()$ and $\mathrm{tri}()$, the next lemma can be proved similarly to Lemma \ref{lemma:wdmadd} (see Appendix \ref{app:proofs}).

\begin{lemma}\label{lemma:wspadd}
Given $2\leq i\leq p$, $S\in S_u$ for some $u\in U$, and ${\bf S}$ s.t. $\mathrm{tri}({\bf S})$ $q(p-(i-1))$-represents $\mathrm{tri}(SOL_{u',i-1})$ where $u'$ is the element preceding $u$ in $U$, \alg{WSP-Add} returns a set that $q(p-i)$-represents $\mathrm{tri}(SOL_{u,i,S,q})$.
\end{lemma}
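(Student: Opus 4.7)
The plan is to mirror the proof of Lemma \ref{lemma:wdmadd}, proving by induction on $j$ that for all $1\leq j\leq q$, the set $\widehat{\cal B}_j$ computed inside \alg{WSP-Add} $(q(p-i)+(q-j))$-represents $\mathrm{tri}(SOL_{u,i,S,j})$; taking $j=q$ then yields the lemma. The induction is calibrated so that the arithmetic meshes with the hypothesis on ${\bf S}$: at $j=1$ we need representation up to $q(p-i)+(q-1)$, and absorbing the new smallest element $u$ of $S$ into the forbidden set bumps this to exactly $q(p-(i-1))$, which is the representation guarantee we have on $\mathrm{tri}({\bf S})$. The transitivity furnished by Observation \ref{obs:transitive} and the definition of \alg{R-Alg} then reduce each inductive step to a statement about ${\cal B}_j$ rather than $\widehat{\cal B}_j$.

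For the base case $j=1$, I would take an arbitrary $(X,{\cal S}',W)\in\mathrm{tri}(SOL_{u,i,S,1})$ and write ${\cal S}'={\cal S}''\cup\{S_1\}$ with ${\cal S}''\in SOL_{u',i-1}$. Since $S_1=\{u\}$, we have $\mathrm{set}(S_1)=\emptyset$, so $X$ coincides with the $X''$ coming from $\mathrm{tri}({\cal S}'')$ and $W=W''+w(S)$; disjointness of the sets in ${\cal S}'$ gives $u\notin X''$. Given $Y$ disjoint from $X$ with $|Y|\leq q(p-i)+(q-1)$, I would invoke the representation hypothesis on the enlarged forbidden set $Y\cup\{u\}$ (of size at most $q(p-(i-1))$) to pick $(X^*,{\cal S}^*,W^*)\in\mathrm{tri}({\bf S})$ with $X^*\cap(Y\cup\{u\})=\emptyset$ and $W^*\geq W''$. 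The triple $(X^*,{\cal S}^*\cup\{S_1\},W^*+w(S))$ then lies in $\widehat{\cal B}_1$, because the guard "no set in ${\cal S}^*$ includes the element in $S_1$" holds: every set in ${\cal S}^*$ has smallest element at most $u'<u$, and $u\notin X^*$ covers the non-smallest elements. This triple avoids $Y$ and carries weight at least $W$, closing the base case. The inductive step $j\geq 2$ then proceeds essentially as in Lemma \ref{lemma:wdmadd}: letting $u_j$ be the $j$-th smallest element of $S$, the rewound triple $(X\setminus\{u_j\},({\cal S}'\setminus\{S_j\})\cup\{S_{j-1}\},W)$ lies in $\mathrm{tri}(SOL_{u,i,S,j-1})$, so the inductive hypothesis applied with forbidden set $Y\cup\{u_j\}$ yields a witness in $\widehat{\cal B}_{j-1}$ that, after re-attaching $u_j$, becomes a member of ${\cal B}_j$ avoiding $Y$ with weight $\geq W$.

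The only place where the proof genuinely diverges from its WDM counterpart, and therefore the main obstacle to simply copying Lemma \ref{lemma:wdmadd}, is the base case. In WDM the first element of a tuple lies in $U_1$ and is automatically fresh with respect to $X$ (which tracks no $U_1$-elements at all), so no extra disjointness check is needed when forming $\widehat{\cal B}_1$; in WSP, however, the smallest element $u$ of the newly added set $S$ could in principle appear as a non-smallest element of a previously chosen set, which is exactly why \alg{WSP-Add} carries the additional guard "no set in ${\cal S}'$ includes the element in $S_1$" at its first line. The bump of the representation parameter from $(q-1)(p-i)$ in the WDM analysis to $q(p-i)$ here is precisely what pays for this issue: it budgets one extra slot per remaining set so that at the base step we may safely throw $u$ into $Y$ and still remain within the $q(p-(i-1))$-representation hypothesis on ${\bf S}$.
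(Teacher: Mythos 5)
Your proof is correct and follows essentially the same route as the paper's: induction on $j$ with the rewinding step that removes $u_j$ from $X$ and applies the inductive hypothesis to the enlarged forbidden set $Y\cup\{u_j\}$. Your treatment of the base case is in fact more detailed than the paper's (which simply asserts it from the first step of \alg{WSP-Add}); your accounting of the one-unit slack that lets $u$ be absorbed into the forbidden set, and your verification that the disjointness guard on $\widehat{\cal B}_1$ is satisfied by the chosen witness, are both accurate.
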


\begin{theorem}
\alg{WSP-Alg} solves ($q,p$)-WSP in $O((0.56201\cdot2.85043^{q})^p|{\cal S}||U|\log^2|U|)$ deterministic time. In particular, it solves ($3,p$)-WSP in $O^*(12.15493^p)$ deterministic time.
\end{theorem}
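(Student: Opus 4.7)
The plan is to mirror the template of Theorem \ref{theorem:wdm} with one decisive change: by Lemma \ref{lemma:wspobs}, only the smallest element of each ``committed'' set is guaranteed not to appear in any later set, so the representation budget drops from $(q-1)(p-i)$ (as in the WDM analysis, where the full first universe is off-limits to future tuples) to $q(p-i)$. This is exactly the parameter choice hard-coded into \alg{WSP-Alg} and \alg{WSP-Add}, and it is what Lemma \ref{lemma:wspadd} already delivers for the \alg{WSP-Add} subroutine.

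The main step would be to prove, by induction on the order in which cells of M are filled, the invariant that for every $u\in U$ and $1\leq i\leq p$, $\mathrm{tri}(\mathrm{M}[u,i])$ $q(p-i)$-represents $\mathrm{tri}(SOL_{u,i})$, which is the direct analogue of Lemma \ref{lemma:wdm}. The base cases $i=1$ and $u=u_s$ follow from the definition of \alg{R-Alg} and the initialization of M. For the inductive step at $\mathrm{M}[u,i]$ with $u>u_s$ and $i\geq 2$, Observation \ref{obs:transitive} together with the outer \alg{R-Alg} call reduces the claim to showing that the intermediate set ${\cal A}$ already $q(p-i)$-represents $\mathrm{tri}(SOL_{u,i})$. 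Given a target $(X,{\cal S}',W)$ and a ``forbidden'' set $Y$ of size at most $q(p-i)$ disjoint from $X$, I would split according to whether some set in ${\cal S}'$ has $u$ as its smallest element. If not, then ${\cal S}'\in SOL_{u',i}$ and the inductive hypothesis on $\mathrm{M}[u',i]$ supplies the witness. Otherwise, letting $S$ be the unique such set, ${\cal S}'\setminus\{S\}\in SOL_{u',i-1}$, and combining the inductive hypothesis on $\mathrm{M}[u',i-1]$ (which is $q(p-(i-1))$-representative) with Lemma \ref{lemma:wspadd} produces the witness via \alg{WSP-Add}$(i,S,\mathrm{M}[u',i-1])$.

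For the running time, I would sum the costs of all \alg{R-Alg} invocations. Each call at level $i$ operates with set-size $s=(q-1)i$ and representation radius $r=q(p-i)$, so $s+r=qp-i$, and the dominant factor becomes $2^{o(qp)}\cdot\max_i\binom{qp-i}{(q-1)i}\bigl(\frac{qp-i}{q(p-i)}\bigr)^{q(p-i)}$ times $|{\cal S}||U|\log^2|U|$. The hard part is this final optimization: substituting $i=\alpha p$ and applying Stirling, the critical point is again $\alpha=1+\frac{1-\sqrt{1+4e}}{2e}$ as in Theorem \ref{theorem:wdm}, but now the upper argument of the binomial shrinks in $i$ while the lower argument grows, so the per-$p$ base picks up the extra factor $0.56201$ and becomes $0.56201\cdot 2.85043^q$ rather than $2.85043^{q-1}$. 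Specializing to $q=3$ and carrying out the numerics yields the claimed $12.15493^p$ bound; everything else is routine bookkeeping.
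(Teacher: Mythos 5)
Your correctness argument coincides with the paper's: the invariant you describe is exactly Lemma \ref{lemma:wsp}, proved in the appendix by the same induction (base cases from the definition of \alg{R-Alg} and the initialization of M; inductive step via Observation \ref{obs:transitive} and the two-case split on whether some set in ${\cal S}'$ has $u$ as its smallest element, with Lemma \ref{lemma:wspadd} supplying the witness in the second case). The setup of the running-time sum is also right, including the identity $s+r=qp-i$ for every \alg{R-Alg} call made at level $i$.

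The gap is in the final optimization. You assert that the critical point is ``again'' $\alpha^*=1+\frac{1-\sqrt{1+4e}}{2e}$ and that reading off the value there yields both the general bound $0.56201\cdot 2.85043^{q}$ and, for $q=3$, the constant $12.15493$. Neither follows. After the substitution the quantity to maximize is (up to lower-order terms) $(\frac{\alpha}{e^{1-\alpha}})^{\alpha}\cdot(\frac{1}{\alpha^{\alpha}(1-\alpha)^{2-2\alpha}})^{q}$; only the second factor is maximized at $\alpha^*$, while the first factor is increasing in $\alpha$, so the true maximizer depends on $q$, is strictly larger than $\alpha^*$, and only tends to $\alpha^*$ as $q\to\infty$. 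For $q=3$ the paper computes the maximizer of the exact expression as $\alpha\cong 0.58226$, and that is what produces $12.15493$; your recipe would instead give $0.56201\cdot 2.85043^{3}\approx 13.01$. Worse, the value of the expression at $\alpha^*$ is a \emph{lower} bound on the maximum, so evaluating there does not establish any upper bound on the running time. The paper closes this with a two-point argument: it shows the maximizer lies between $\alpha^*$ and the maximizer $\alpha'<0.550148$ attained at $q=1500$, bounds the increasing first factor by its value at $\alpha'$ (this is where $0.56201$ comes from) and the second factor by its value at $\alpha^*$ (giving $2.85043^{q}$), and then notes that the resulting product also dominates the expression for smaller $q$. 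Some argument of this kind is needed; as written, your running-time bound is unproved.
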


\begin{proof}
By using the new definitions of $\mathrm{set}()$ and $\mathrm{tri}()$, the next lemma, which clearly implies the correctness of the algorithm, can be proved similarly to Lemma \ref{lemma:wdm} (see Appendix \ref{app:proofs}).

\begin{lemma}\label{lemma:wsp}
For all $u\in U$ and $1\leq i\leq p$, $\mathrm{tri}(\mathrm{M}[u,i])$ $q(p-i)$-represents $\mathrm{tri}(SOL_{u,i})$.
\end{lemma}
Denote $x= 2^{o(qp)}|{\cal S}||U|$$\log^2$$|U|$. By the definition of \alg{R-Alg} and the pseudocode, the algorithm runs in~time

\[O(\sum_{u\in U}\sum_{i=1}^{p}\sum_{j=1}^q[{qp-i \choose (q-1)(i-1)+j-1}2^{o(qp)}|{\cal S}|(\frac{qp-i}{qp-qi+q-j})^{qp-qi+q-j}\log^2 |U|)] =\]

\[O(x\cdot\mathrm{max}_{i=1}^p\mathrm{max}_{j=1}^q\left\{{qp-i \choose qi-i-q+j}(\frac{qp-i}{qp-qi+q-j})^{qp-qi+q-j}\right\}) =\]

\[O(x\cdot\mathrm{max}_{t=1}^{qp}\left\{{qp-\lceil(t/q)\rceil \choose t-\lceil(t/q)\rceil}(\frac{qp-\lceil(t/q)\rceil}{qp-t})^{qp-t}\right\}) = O(x\cdot\mathrm{max}_{t=1}^{qp}\left\{\frac{(qp-(t/q))^{2qp-t-(t/q)}}{(t-(t/q))^{t-(t/q)}(qp-t)^{2qp-2t}}\right\}) =\]

\[O(x\cdot\mathrm{max}_{0<\alpha<1}\left\{\frac{(qp-\alpha p)^{2qp-\alpha qp-\alpha p}}{(\alpha qp-\alpha p)^{\alpha qp-\alpha p}(qp-\alpha qp)^{2qp-2\alpha qp}}\right\})\]

\[O(x\cdot\mathrm{max}_{0<\alpha<1}\left\{[\frac{(q-\alpha)^{2q-\alpha q-\alpha}}{(\alpha q-\alpha)^{\alpha q-\alpha}(q-\alpha q)^{2q-2\alpha q}}]^p\right\}) =\]

\[O(x\cdot[\mathrm{max}_{0<\alpha<1}\left\{(\frac{\alpha q-\alpha}{q-\alpha})^{\alpha}(\frac{(q-\alpha)^{2-\alpha}}{(\alpha q-\alpha)^{\alpha}(q-\alpha q)^{2-2\alpha}})^q\right\}]^p) = (*)\]
When $q=3$, the maximum of (*) is achieved at $\alpha\cong 0.58226$. Thus, \alg{WSP-Alg} solves ($3,p$)-WSP in $O^*(12.15493^p)$ deterministic time. Now, note that

\[(*) = O(x\cdot[\mathrm{max}_{0<\alpha<1}\left\{(\frac{\alpha}{e^{1-\alpha}})^\alpha(\frac{1}{\alpha^\alpha(1-\alpha)^{2-2\alpha}})^q\right\}]^p)\]
As we increase $q$, the $\alpha$ for which we get the maximum decreases, staying greater than $\alpha^* = 1 + \frac{1-\sqrt{1+4e}}{2e}$ (since this $\alpha^*$ maximizes $(\frac{1}{\alpha^{\alpha}(1-\alpha)^{2-2\alpha}})^q$). When $q=1,500$, the maximum of (*) is achieved at $\alpha' < 0.550148$, and thus when $q\geq 1,500$, we get that \alg{WSP-Alg} runs in time $O(x\cdot(\frac{\alpha'}{e^{1-\alpha'}})^{\alpha'}(\frac{1}{{\alpha^*}^{\alpha^*}(1-\alpha^*)^{2-2\alpha^*}})^q) = O(x\cdot (0.56201\cdot2.85043^{q})^p)$. Since this expression bounds (*) for smaller values for $q$, we get the desired running time.\qed
\end{proof}

\mysection{Kernels for $(q,p)$-WDM and $(q,p)$-WSP}\label{section:kernel}

We first give the notation used in this section. Then we present our kernel for $(q,p)$-WDM, followed by our kernel for $(q,p)$-WSP. Finally, by using these kernels, we improve the running times (though not the $O^*$ running times) of the algorithms presented in the previous three sections. In this section, given an input to $(q,p)$-WDM or $(q,p)$-WSP, assume that any element in the universe(s) appears in some tuple$\setminus$set in ${\cal S}$, since otherwise we can delete it.

\myparagraph{Notation}Given a tuple or a set $S$, let $\mathrm{set}(S)$ be the set of elements in $S$. Given a set of tuples or sets ${\cal S}'$, denote $\mathrm{tri}({\cal S}') = \{(\mathrm{set}(S),S,w(S)): S\in {\cal S}\}$.

\myparagraph{A Kernel for $(q,p)$-WDM}We now present a kernelization algorithm, that we call \alg{WDM-Ker}, for $(q,p)$-WDM (see the pseudocode below).

\renewcommand{\thealgorithm}{4}
\begin{algorithm}[!ht]
\caption{\alg{WDM-Ker}($U_1,\ldots,U_q,{\cal S},w,p$)}
\begin{algorithmic}[1]\label{ker:wdm}
\STATE\label{step:ker1} {\bf if} $|{\cal S}|\leq e^q(p-1)^q$ {\bf then return} ($U_1,\ldots,U_q,{\cal S},w,p$).
\STATE\label{step:ker2} $\widehat{\cal A}\Leftarrow$ \alg{K-Alg}($\bigcup_{i=1}^qU,q,q(p-1),$tri(${\cal S}$)).
\STATE\label{step:ker3} {\bf for} $i=1,\ldots,q$ {\bf do} $U^*_i\Leftarrow\{u\in U_i: \exists (X,{\cal S}',W)\in\widehat{\cal A}$ s.t. $u\in X\}$.
\STATE\label{step:ker4} {\bf return} ($U^*_1,\ldots,U^*_q,\{S: \exists X,W\ \mathrm{s.t.}\ (X,S,W)\in\widehat{A}\},w,p$).
\end{algorithmic}
\end{algorithm}

\begin{theorem}\label{theorem:ker1}
Given an input ($U_1,\ldots,U_q,{\cal S},w,p$) for $(q,p)$-WDM, \alg{WDM-Ker} returns an input ($U^*_1,\ldots,$\\$U^*_q,{\cal S}^*,w,p$) for $(q,p)$-WDM, s.t. $\sum_{i=1}^q|U^*_i|\leq q|{\cal S}^*|$, $|{\cal S}^*| = O(e^q(p-1)^q)$, and a set ${\cal S}'$ solves ($U_1,\ldots,U_q,$\\${\cal S},w,p$) iff there is a solution ${\cal S}''$ to ($U^*_1,\ldots,U^*_q,{\cal S}^*,w,p$) s.t. $\sum_{S\in{\cal S}'}w(S)=\sum_{S\in{\cal S}''}w(S)$. \alg{WDM-Ker} runs in time $O([\mathrm{min}(|{\cal S}|,e^q(p-1)^q)]^{\tilde{w}-1}|{\cal S}|q^2\log|\bigcup_{i=1}^qU|)$.
\end{theorem}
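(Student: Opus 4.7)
The plan is to verify four claims in turn: (i) the size bound $|{\cal S}^*| = O(e^q(p-1)^q)$; (ii) the universe bound $\sum_{i=1}^q |U^*_i| \leq q|{\cal S}^*|$; (iii) the solution-equivalence; and (iv) the running time.

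For (i), if Step~\ref{step:ker1} returns early then $|{\cal S}^*| = |{\cal S}| \leq e^q(p-1)^q$ by the test of that step. Otherwise $|{\cal S}^*| = |\widehat{\cal A}|$, and the \alg{K-Alg} guarantee with $s = q$, $r = q(p-1)$ yields $|\widehat{\cal A}| \leq \binom{qp}{q}$; standard binomial estimates (e.g., $\binom{qp}{q} \leq (qp)^q/q!$ combined with Stirling) bound this by $O(e^q(p-1)^q)$. Claim (ii) is immediate: each tuple in ${\cal S}^* \subseteq U_1 \times \cdots \times U_q$ contributes exactly one element to each $U_i^*$, and by the definition in Step~\ref{step:ker3} every element of $U_i^*$ arises this way, so $|U_i^*| \leq |{\cal S}^*|$ for every $i$.

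Claim (iii) is the main content. The direction $(\Leftarrow)$ is trivial: $U^*_i \subseteq U_i$ and ${\cal S}^* \subseteq {\cal S}$, so any solution to the kernel is verbatim a solution to the original of the same weight. For $(\Rightarrow)$, given a solution ${\cal S}' = \{S_1,\ldots,S_p\}$ to the original instance, I would inductively replace the $S_j$'s by members of ${\cal S}^*$. Maintaining the invariant that $T_1,\ldots,T_{j-1} \in {\cal S}^*$ are already chosen with $w(T_i) \geq w(S_i)$ and that $T_1,\ldots,T_{j-1},S_j,\ldots,S_p$ are pairwise disjoint, define $Y_j = \bigcup_{i<j}\mathrm{set}(T_i) \cup \bigcup_{i>j}\mathrm{set}(S_i)$; then $|Y_j| = q(p-1)$ and $\mathrm{set}(S_j) \cap Y_j = \emptyset$. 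Applying the $q(p-1)$-representative property of $\widehat{\cal A}$ to the triple $(\mathrm{set}(S_j), S_j, w(S_j)) \in \mathrm{tri}({\cal S})$ produces $(X^*, T_j, W^*) \in \widehat{\cal A}$ with $X^* \cap Y_j = \emptyset$ and $W^* \geq w(S_j)$; since $X^* = \mathrm{set}(T_j)$ and $W^* = w(T_j)$ by the definition of $\mathrm{tri}$ in this section, this $T_j$ extends the induction. After $p$ steps, $\{T_1,\ldots,T_p\} \subseteq {\cal S}^*$ consists of $p$ pairwise disjoint tuples whose elements lie entirely in $\bigcup_i U_i^*$, with total weight at least $\sum_j w(S_j)$. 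Combined with the trivial reverse inequality, this shows the original and kernel optima coincide, which is exactly the required bi-implication.

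Claim (iv) follows by substituting $s = q$, $r = q(p-1)$ into the \alg{K-Alg} running time. If Step~\ref{step:ker1} triggers, only polynomial work is done. Otherwise Step~\ref{step:ker2} dominates at cost $O(|{\cal S}|\binom{qp}{q}^{\tilde{w}-1}\log(q!|\bigcup_i U_i|^{q^2}))$, which via (i) and $\log(q!|U|^{q^2}) = O(q^2\log|\bigcup_i U_i|)$ simplifies to the claimed $O([\min(|{\cal S}|, e^q(p-1)^q)]^{\tilde{w}-1}|{\cal S}|q^2\log|\bigcup_i U_i|)$. The main obstacle is the inductive replacement in (iii): one must verify at every step $j$ that $Y_j$ has size exactly $q(p-1)$ and is disjoint from $\mathrm{set}(S_j)$ so that the representative property genuinely applies, and that cumulative substitutions never introduce any shared element between the $T_i$'s, so the final collection remains a feasible matching.
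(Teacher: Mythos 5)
Your proposal is correct and follows essentially the same route as the paper: the size and universe bounds come directly from the \alg{K-Alg} guarantees, the easy direction of the equivalence from ${\cal S}^*\subseteq{\cal S}$ and $U^*_i\subseteq U_i$, and the hard direction via the same tuple-by-tuple inductive replacement (your $Y_j$ and invariant are exactly the paper's Lemma~\ref{lemma:wdmker}, spelled out in slightly more detail as to why the $q(p-1)$-representation property applies). No gaps.
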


\begin{proof}
If $|{\cal S}|\leq e^q(p-1)^q$, then by Step \ref{step:ker1}, the algorithm is clearly correct and runs in the desired time; thus next assume that $|{\cal S}| > e^q(p-1)^q$.

By the definition of \alg{K-Alg} and Steps \ref{step:ker2}--\ref{step:ker4}, we get that $\sum_{i=1}^q|U^*_i|\leq q|{\cal S}^*|$ and $|{\cal S}^*|\leq{qp \choose q} = O(\frac{p^{qp}}{(p-1)^{qp-q}}) = O(e^q(p-1)^q)$. Moreover, we get that the algorithm runs in time bounded by

\[O(|{\cal S}|{qp \choose q}^{\tilde{w}-1}\log(q!|\bigcup_{i=1}^qU|^{q^2})) = O(|{\cal S}|(e^q(p-1)^q)^{\tilde{w}-1}q^2\log|\bigcup_{i=1}^qU|).\]
By the definition of \alg{K-Alg} and Steps \ref{step:ker2}--\ref{step:ker4}, we get that ($\forall i\in\{1,\ldots,q\}$: $U^*_i\subseteq U_i$) and ${\cal S}^*\subseteq {\cal S}$. Thus, if ($U_1,\ldots,U_q,{\cal S},w,p$) does not have a solution, then ($U^*_1,\ldots,U^*_q,{\cal S}^*,w,p$) does not have a solution, and if a set ${\cal S}'$ is a solution to ($U_1,\ldots,U_q,{\cal S},w,p$), then ($U^*_1,\ldots,U^*_q,{\cal S}^*,w,p$) does not have a solution ${\cal S}''$ s.t. $\sum_{S\in{\cal S}'}w(S)<\sum_{S\in{\cal S}''}w(S)$.

It is now enough to prove that given a solution ${\cal S}'$ to ($U_1,\ldots,U_q,{\cal S},w,p$), there is a set of disjoint tuples ${\cal S}''\subseteq {\cal S}^*$ s.t. $\sum_{S\in{\cal S}'}w(S)\leq\sum_{S\in{\cal S}''}w(S)$. Consider the following lemma.

\begin{lemma}\label{lemma:wdmker}
Let ${\cal S}'=\{S'_1,\ldots,S'_p\}$ be a solution to ($U_1,\ldots,U_q,{\cal S},w,p$). For all $i\in\{0,\ldots,p\}$, there is a set of disjoint tuples ${\cal S}^*_i=\{S^*_1,\ldots,S^*_i\}\subseteq {\cal S}^*$ s.t. $\sum_{j=1}^i w(S'_j)\leq \sum_{j=1}^i w(S^*_j)$, whose tuples are disjoint from those in $\{S'_{i+1},\ldots,S'_p\}$.
\end{lemma}

\begin{proof}
We prove the lemma by using induction on $i$. The claim clearly holds for $i=0$, since then we can choose ${\cal S}^*_0=\{\}$. Next consider some $i\in\{1,\ldots,p\}$ and assume that the claim holds for $i-1$. By the induction hypothesis, there is a set of disjoint tuples ${\cal S}^*_{i-1}=\{S^*_1,\ldots,S^*_{i-1}\}\subseteq {\cal S}^*$ s.t. $\sum_{j=1}^{i-1} w(S'_j)\leq \sum_{j=1}^{i-1} w(S^*_j)$, whose tuples are disjoint from those in $\{S'_i,\ldots,S'_p\}$. By the definition of \alg{K-Alg} and Steps \ref{step:ker2} and \ref{step:ker4}, there is a tuple $S^*_i\in{\cal S}^*$ s.t. $w(S'_i)\leq w(S^*_i)$, which is disjoint from the tuples in $\{S^*_1,\ldots,S^*_{i-1},S'_{i+1},\ldots,S'_p\}$. Thus, by defining ${\cal S}^*_i=\{S^*_1,\ldots,S^*_i\}\subseteq {\cal S}^*$, we conclude the lemma.\qed
\end{proof}
Lemma \ref{lemma:wdmker}, implying the existence of the required set, concludes the theorem.\qed
\end{proof}

\myparagraph{A Kernel for $(q,p)$-WSP} By trivial modifications of \alg{WDM-Ker} (see Appendix~\ref{app:kerwsp}), we get a kernelization algorithm, that we call \alg{WSP-Ker}, which satisfies the following result.

\begin{theorem}\label{theorem:ker2}
Given an input ($U,{\cal S},w,p$) for $(q,p)$-WSP, \alg{WSP-Ker} returns an input ($U^*,{\cal S}^*,w,p$) for $(q,p)$-WSP, s.t. $|U^*|\leq q|{\cal S}^*|$, $|{\cal S}^*| = O(e^q(p-1)^q)$, and a set ${\cal S}'$ solves ($U,{\cal S},w,p$) iff there is a solution ${\cal S}''$ to ($U^*,{\cal S}^*,w,p$) s.t. $\sum_{S\in{\cal S}'}w(S)=\sum_{S\in{\cal S}''}w(S)$. \alg{WSP-Ker} runs in time $O([\mathrm{min}(|{\cal S}|,e^q(p-1)^q)]^{\tilde{w}-1}|{\cal S}|q^2\log|U|)$.
\end{theorem}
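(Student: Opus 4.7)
The plan is to mirror the proof of Theorem \ref{theorem:ker1} almost verbatim, since the only structural difference between $(q,p)$-WDM and $(q,p)$-WSP at the level of kernelization is that the latter has a single universe $U$ rather than $q$ disjoint universes $U_1,\ldots,U_q$. I define \alg{WSP-Ker} to behave exactly like \alg{WDM-Ker}: if $|{\cal S}|\leq e^q(p-1)^q$, return the input unchanged; otherwise invoke \alg{K-Alg}($U,q,q(p-1),\mathrm{tri}({\cal S}))$ to obtain a $q(p-1)$-representative subset $\widehat{\cal A}$ of $\mathrm{tri}({\cal S})$, let ${\cal S}^*$ be the underlying sets and $U^*$ the elements appearing in those sets.

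First I would dispose of the sub-threshold case trivially. In the non-trivial case, the size bounds follow directly from \alg{K-Alg}: we have $|{\cal S}^*|\leq{qp\choose q}=O(p^{qp}/(p-1)^{qp-q})=O(e^q(p-1)^q)$, and $|U^*|\leq q|{\cal S}^*|$ since each set has $q$ elements. The running time bound of $O([\min(|{\cal S}|,e^q(p-1)^q)]^{\tilde{w}-1}|{\cal S}|q^2\log|U|)$ follows from substituting $s=q, r=q(p-1)$ into the running time of \alg{K-Alg}, exactly as in Theorem \ref{theorem:ker1}.

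The correctness argument is the analog of Lemma \ref{lemma:wdmker}. Since ${\cal S}^*\subseteq {\cal S}$ and $U^*\subseteq U$, any solution to the reduced instance immediately lifts to a solution to the original of the same weight, so I only need the reverse direction: given a solution ${\cal S}'=\{S'_1,\ldots,S'_p\}$ to the original instance, produce a set of disjoint sets ${\cal S}''\subseteq {\cal S}^*$ with $\sum_{S\in{\cal S}''}w(S)\geq\sum_{S\in{\cal S}'}w(S)$. I would do this by induction on $i\in\{0,\ldots,p\}$, exhibiting disjoint $S^*_1,\ldots,S^*_i\in {\cal S}^*$ with $\sum_{j\leq i}w(S^*_j)\geq\sum_{j\leq i}w(S'_j)$ and each $S^*_j$ disjoint from $S'_{i+1}\cup\cdots\cup S'_p$. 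At step $i$, the set $Y=\mathrm{set}(S^*_1)\cup\cdots\cup\mathrm{set}(S^*_{i-1})\cup\mathrm{set}(S'_{i+1})\cup\cdots\cup\mathrm{set}(S'_p)$ has $|Y|\leq q(p-1)$ and is disjoint from $\mathrm{set}(S'_i)$, so the $q(p-1)$-representation property supplies $(X^*,S^*_i,W^*)\in\widehat{\cal A}$ with $X^*\cap Y=\emptyset$ and $W^*\geq w(S'_i)$, completing the induction.

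The only step that deserves any real thought is verifying that the $r$-representation property is being invoked with the correct value $r=q(p-1)$: in the WSP setting there is no reason to exclude a distinguished "first" coordinate as in WDM, so each $\mathrm{set}(S)$ has full size $q$ and the forbidden set $Y$ coming from the remaining $p-1$ solution sets plus the already-chosen replacements has total size at most $q(p-1)$. Everything else is a notational relabeling of the WDM kernel proof.
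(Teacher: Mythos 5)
Your proposal is correct and follows essentially the same route as the paper: the paper's \alg{WSP-Ker} is exactly the verbatim adaptation of \alg{WDM-Ker} you describe, and its correctness proof is the same induction (Lemma \ref{lemma:wspker}) in which the forbidden set built from the $i-1$ already-chosen replacements and the $p-i$ remaining solution sets has size at most $q(p-1)$, matching the parameter $r=q(p-1)$ passed to \alg{K-Alg}. Your observation that the only substantive change from the WDM case is that $\mathrm{set}(S)$ now has full size $q$ (so $r=q(p-1)$ rather than a smaller value) is precisely the point the paper relies on.
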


\myparagraph{Improving the Running Times of \alg{WDM-Alg}, \alg{DM-Alg} and \alg{WSP-Alg}}Since $e^qq(p-1)^q=O(2^{O(q\log p)})$\\$=O(2^{o(qp)})$, Theorems \ref{theorem:wdm}--\ref{theorem:ker2} imply the following results.

\begin{itemize}
\item ($q,p$)-WDM can be solved in $O(2^{o(qp)}|{\cal S}|\log|U| + 2.85043^{(q-1)p})$ deterministic time. In particular, ($3,p$)-WDM can be solved in $O(2^{o(qp)}|{\cal S}|\log|U| + 8.12492^p)$ deterministic time.
\item ($3,p$)-DM can be solved in $O(2^{o(qp)}|{\cal S}|\log|U| + 8.04143^p)$ deterministic time.
\item ($q,p$)-WSP can be solved in $O(2^{o(qp)}|{\cal S}|\log|U| + (0.56201\cdot2.85043^{q})^p)$ deterministic time. In particular, ($3,p$)-WSP can be solved in $O(2^{o(qp)}|{\cal S}|\log|U| + 12.15493^p)$ deterministic time.
\end{itemize}

\bibliographystyle{splncs03}
\bibliography{references}

\appendix

\section{Some Proofs}\label{app:proofs}

\subsection{Proof of Lemma \ref{lemma:wspadd}}
By using induction on $j$, we prove that for all $1\leq j\leq q$, $\widehat{\cal B}_j$ $(q(p-i)+(q-j))$-represents tri($SOL_{u,i,S,j}$). By Step \ref{stepwdmadd:b1}, since tri$({\bf S})$ $q(p-(i-1))$-represents tri($SOL_{u',i-1}$), we have that $\widehat{\cal B}_1$ $(q(p-i)+(q-1))$-represents tri($SOL_{u,i,S,1}$).

Next consider some $2\leq j\leq q$, and assume that the lemma holds for all $1\leq j'<j$. By the definition of \alg{R-Alg}, Observation \ref{obs:transitive} and Step \ref{stepwdmadd:hatbj}, it is enough to prove that ${\cal B}_j$ $(q(p-i)+(q-j))$-represents tri$(SOL_{u,i-1,S,j})$.

By the induction hypothesis and Step \ref{stepwdmadd:bj}, we get that ${\cal B}_j\subseteq \mathrm{tri}(SOL_{u,i,S,j})$. Assume that there are $(X,{\cal S}',W)\in\mathrm{tri}(SOL_{u,i,S,j})$ and $Y\subseteq U\setminus X$ s.t. $|Y|\leq (q(p-i)+(q-j))$, since otherwise the claim clearly holds. Let $u_j$ be the $j^\mathrm{st}$ smallest element in $S$. Note that $(X\setminus\{u_j\},({\cal S}'\setminus\{S_j\})\cup\{S_{j-1}\},W)\in \mathrm{tri}(SOL_{u,i,S,j-1})$. Thus, by the induction hypothesis, there is $(X^*,{\cal S}^*,W^*)\in\widehat{\cal B}_{j-1}$ s.t. $X^*\cap (Y\cup\{u_j\})=\emptyset$ and $W^*\geq W$. We get that $(X^*\cup\{u_j\},({\cal S}^*\setminus\{S_{j-1}\})\cup\{S_j\},W^*)\in{\cal B}_j$. Since $(X^*\cup\{u_j\})\cap Y=\emptyset$ and $W^*\geq W$, we get that the claim holds.\qed

\subsection{Proof of Lemma \ref{lemma:wsp}}
We prove the lemma by using induction on the order of the computation of M. For all $u\in U$, $SOL_{u,1} = \{\{S\}: S\in\bigcup_{u'\in U\ \mathrm{s.t.}\ u'\leq u}{\cal S}_{u'}\}$; and thus, by the definition of \alg{R-Alg} and Steps \ref{stepwdm:mu1a} and \ref{stepwdm:mu1b}, tri(M$[u,1]$) $q(p-1)$-represents tri$(SOL_{u,1})$. For all $2\leq i\leq p$, $SOL_{u_s,i} =\{\}$; and thus, by the initialization of M, tri(M$[u_s,i]$) $q(p-i)$-represents tri$(SOL_{u_s,i})$.

Next consider an iteration of Step \ref{stepwdm:loop} that corresponds to some $u\in U\setminus\{u_s\}$ and $2\leq i\leq p$, and assume that the lemma holds for the element $u'$ preceding $u$ in $U$ and all $1\leq i'\leq i$. By the definition of \alg{R-Alg}, Observation \ref{obs:transitive} and Steps \ref{stepwdm:step2} and \ref{stepwdm:step3}, it is enough to prove that $\cal A$ $q(p-i)$-represents tri$(SOL_{u,i})$.

By the induction hypothesis, Step \ref{stepwdm:step1} and Lemma \ref{lemma:wspadd}, we have that ${\cal A}\subseteq \mathrm{tri}(SOL_{u,i})$. Assume that there are $(X,{\cal S}',W)\in\mathrm{tri}(SOL_{u,i})$ and $Y\subseteq U\setminus X$ s.t. $|Y|\leq q(p-i)$, since otherwise the lemma clearly holds. We have two possible cases as follows.
\begin{enumerate}
\item ${\cal S}'\cap {\cal S}_u=\emptyset$. Note that ${\cal S}'\in SOL_{u',i}$. Thus, by the induction hypothesis, there is $(X^*,{\cal S}^*,W^*)\in\mathrm{tri(M}[u',i])$ s.t. $X^*\cap Y=\emptyset$ and $W^*\geq W$; and therefore $(X^*,{\cal S}^*,W^*)\in{\cal A}$.
\item ${\cal S}'\cap {\cal S}_u=\{S\}$ for some $S$. Note that ${\cal S}'\in SOL_{u,i,S,q}$. Thus, by the induction hypothesis and Lemma \ref{lemma:wspadd}, \alg{WSP-Add}$(i,S,\mathrm{M}[u',i-1])$ returns a set that includes a triple $(X^*,{\cal S}^*,W^*)$ s.t. $X^*\cap Y=\emptyset$ and $W^*\geq W$; and therefore $(X^*,{\cal S}^*,W^*)\in{\cal A}$.
\end{enumerate}
We get that there is $(X^*,{\cal S}^*,W^*)\in{\cal A}$ s.t. $X^*\cap Y=\emptyset$ and $W^*\geq W$.\qed

\section{A Kernel for $(q,p)$-WSP}\label{app:kerwsp}

We now present a kernelization algorithm, that we call \alg{WSP-Ker}, for $(q,p)$-WSP (see the pseudocode below).

\renewcommand{\thealgorithm}{4}
\begin{algorithm}[!ht]
\caption{\alg{WSP-Ker}($U,{\cal S},w,p$)}
\begin{algorithmic}[1]\label{ker:wsp}
\STATE {\bf if} $|{\cal S}|\leq e^q(p-1)^q$ {\bf then return} ($U,{\cal S},w,p$).
\STATE $\widehat{\cal A}\Leftarrow$ \alg{K-Alg}($U,q,q(p-1),$tri(${\cal S}$)).
\STATE $U^*\Leftarrow\{u\in U: \exists (X,{\cal S}',W)\in\widehat{\cal A}$ s.t. $u\in X\}$.
\STATE {\bf return} ($U^*,\{S: \exists X,W\ \mathrm{s.t.}\ (X,S,W)\in\widehat{A}\},w,p$).
\end{algorithmic}
\end{algorithm}

\begin{proof}[Theorem \ref{theorem:ker2}]
If $|{\cal S}|\leq e^q(p-1)^q$, then by Step \ref{step:ker1}, the algorithm is clearly correct and runs in the desired time; thus next assume that $|{\cal S}| > e^q(p-1)^q$.

By the definition of \alg{K-Alg} and Steps \ref{step:ker2}--\ref{step:ker4}, we get that $\sum|U^*|\leq q|{\cal S}^*|$ and $|{\cal S}^*|\leq{qp \choose q} = O(\frac{p^{qp}}{(p-1)^{qp-q}}) = O(e^q(p-1)^q)$. Moreover, we get that the algorithm runs in time bounded by

\[O(|{\cal S}|{qp \choose q}^{\tilde{w}-1}\log(q!|U|^{q^2})) = O(|{\cal S}|(e^q(p-1)^q)^{\tilde{w}-1}q^2\log|U|).\]
By the definition of \alg{K-Alg} and Steps \ref{step:ker2}--\ref{step:ker4}, we get that $U^*\subseteq U$ and ${\cal S}^*\subseteq {\cal S}$. Thus, if ($U,{\cal S},w,p$) does not have a solution, then ($U^*,{\cal S}^*,w,p$) does not have a solution, and if a set ${\cal S}'$ is a solution to ($U,{\cal S},w,p$), then ($U^*,{\cal S}^*,w,p$) does not have a solution ${\cal S}''$ s.t. $\sum_{S\in{\cal S}'}w(S)<\sum_{S\in{\cal S}''}w(S)$.

It is now enough to prove that given a solution ${\cal S}'$ to ($U,{\cal S},w,p$), there is a set of disjoint sets ${\cal S}''\subseteq {\cal S}^*$ s.t. $\sum_{S\in{\cal S}'}w(S)\leq\sum_{S\in{\cal S}''}w(S)$. Consider the following lemma.

\begin{lemma}\label{lemma:wspker}
Let ${\cal S}'=\{S'_1,\ldots,S'_p\}$ be a solution to ($U,{\cal S},w,p$). For all $i\in\{0,\ldots,p\}$, there is a set of disjoint sets ${\cal S}^*_i=\{S^*_1,\ldots,S^*_i\}\subseteq {\cal S}^*$ s.t. $\sum_{j=1}^{i} w(S'_j)\leq \sum_{j=1}^{i} w(S^*_j)$, whose sets are disjoint from those in $\{S'_{i+1},\ldots,S'_p\}$.
\end{lemma}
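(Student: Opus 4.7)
The plan is to mimic the proof of Lemma \ref{lemma:wdmker} verbatim, replacing ``tuple'' by ``set'' and noting that \alg{K-Alg} was called in \alg{WSP-Ker} with $s=q$ and $r=q(p-1)$, producing a $q(p-1)$-representative subset $\widehat{{\cal A}}$ of $\mathrm{tri}({\cal S})$. The only thing that really needs to be checked is that the forbidden set $Y$ we will feed to the representative-set guarantee has size at most $q(p-1)$.

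I will proceed by induction on $i$. The base case $i=0$ is immediate: take ${\cal S}^*_0=\emptyset$, which trivially satisfies all three requirements (disjointness, weight inequality, and disjointness from $\{S'_1,\ldots,S'_p\}$). For the inductive step, fix $i\in\{1,\ldots,p\}$ and apply the hypothesis to $i-1$, obtaining ${\cal S}^*_{i-1}=\{S^*_1,\ldots,S^*_{i-1}\}\subseteq {\cal S}^*$ of disjoint sets with $\sum_{j=1}^{i-1}w(S'_j)\leq \sum_{j=1}^{i-1}w(S^*_j)$, all disjoint from the sets in $\{S'_i,\ldots,S'_p\}$.

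Now define $Y=\bigcup_{j=1}^{i-1}\mathrm{set}(S^*_j)\cup\bigcup_{j=i+1}^{p}\mathrm{set}(S'_j)$; since every set in ${\cal S}$ has size $q$ and $|{\cal S}^*_{i-1}|+|\{S'_{i+1},\ldots,S'_p\}|=(i-1)+(p-i)=p-1$, we get $|Y|\leq q(p-1)=r$. Observe that $(\mathrm{set}(S'_i),S'_i,w(S'_i))\in\mathrm{tri}({\cal S})$ and $\mathrm{set}(S'_i)\cap Y=\emptyset$ by the disjointness of the solution ${\cal S}'$ together with the part of the induction hypothesis saying that ${\cal S}^*_{i-1}$'s sets are disjoint from $S'_i$. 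Hence, by the $q(p-1)$-representation property of $\widehat{{\cal A}}$ and the construction of ${\cal S}^*$ in Steps 2--4 of \alg{WSP-Ker}, there is a triple $(X^*,S^*_i,W^*)\in\widehat{{\cal A}}$ with $X^*\cap Y=\emptyset$ and $W^*\geq w(S'_i)$, so that $S^*_i\in{\cal S}^*$, $w(S^*_i)=W^*\geq w(S'_i)$, and $S^*_i$ is disjoint from the sets in $\{S^*_1,\ldots,S^*_{i-1},S'_{i+1},\ldots,S'_p\}$. Setting ${\cal S}^*_i={\cal S}^*_{i-1}\cup\{S^*_i\}$ closes the induction.

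The only potentially subtle point is confirming the size bound $|Y|\leq q(p-1)$; once that is in hand, the argument is a direct analog of the $(q,p)$-WDM case. The rest (correctness of weights, subset relations, and the ``if-and-only-if with equal optimum'' statement quoted in Theorem \ref{theorem:ker2}) then follows by exactly the same bookkeeping as in the proof of Theorem \ref{theorem:ker1}: one direction is trivial from ${\cal S}^*\subseteq {\cal S}$ and $U^*\subseteq U$, and the other direction is supplied by taking $i=p$ in the lemma above.
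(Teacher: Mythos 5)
Your proof is correct and follows essentially the same induction as the paper's own argument, simply making explicit the choice of the forbidden set $Y=\bigcup_{j<i}\mathrm{set}(S^*_j)\cup\bigcup_{j>i}\mathrm{set}(S'_j)$ and the bound $|Y|\leq q(p-1)$ that the paper leaves implicit in its appeal to the definition of \alg{K-Alg}. No gaps.
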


\begin{proof}
We prove the lemma by using induction on $i$. The claim clearly holds for $i=0$, since then we can choose ${\cal S}^*_0=\{\}$. Next consider some $i\in\{1,\ldots,p\}$ and assume that the claim holds for $i-1$. By the induction hypothesis, there is a set of disjoint sets ${\cal S}^*_{i-1}=\{S^*_1,\ldots,S^*_{i-1}\}\subseteq {\cal S}^*$ s.t. $\sum_{j=1}^{i-1} w(S'_j)\leq \sum_{j=1}^{i-1} w(S^*_j)$, whose sets are disjoint from those in $\{S'_i,\ldots,S'_p\}$. By the definition of \alg{K-Alg} and Steps \ref{step:ker2} and \ref{step:ker4}, there is a set $S^*_i\in{\cal S}^*$ s.t. $w(S'_i)\leq w(S^*_i)$, which is disjoint from the sets in $\{S^*_1,\ldots,S^*_{i-1},S'_{i+1},\ldots,S'_p\}$. Thus, by defining ${\cal S}^*_i=\{S^*_1,\ldots,S^*_i\}\subseteq {\cal S}^*$, we conclude the lemma.\qed
\end{proof}
Lemma \ref{lemma:wspker}, implying the existence of the required set, concludes the theorem.\qed
\end{proof}

\end{document}